\def\lf{\tiny}
\newcounter{linenumber}
\def\nnll{\refstepcounter{linenumber}\lf\thelinenumber}
\newtheorem{observation}[theorem]{Observation}
\newcommand{\safeagree}{\ensuremath{\mathit{SA}}}
\newcommand{\add}[1]{\textsc{add$(#1)$}}
\newcommand{\get}{\textsc{get()}}
\newcommand{\update}[1]{\textsc{update$(#1)$}}
\newcommand{\scan}{\textsc{scan()}}
\newcommand{\setagree}[1]{\textsc{setagree($#1$)}}
\newcommand{\propose}[1]{\textsc{propose(\ensuremath{#1})}}
\newcommand{\resolve}{\textsc{resolve()}}
\newcommand{\reduce}[2]{\ifthenelse{\equal{#1}{}}{%
    \textsc{reduce(\ensuremath{#2})}}{%
    \textsc{reduce\ensuremath{_#1}(\ensuremath{#2})}}}
\newcommand{\skel}{\ensuremath{\text{skel}}}
\newcommand{\bary}{\ensuremath{\text{bary}}}
    \renewcommand{\title}[1]{\gdef\@title{\bfseries\LARGE#1}}
\newcommand{\cI}{\ensuremath{\mathcal{I}}}
\newcommand{\cK}{\ensuremath{\mathcal{K}}}
\newcommand{\cL}{\ensuremath{\mathcal{L}}}
\newcommand{\cO}{\ensuremath{\mathcal{O}}}
\begin{document}

\mainmatter
\title{A characterization of colorless anonymous $t$-resilient task computability}
\titlerunning{A characterization of colorless anonymous $t$-resilient task computability} 

\author{Carole Delporte-Gallet\inst{1}
    \and Hugues Fauconnier\inst{1}
    \and Sergio Rajsbaum\inst{2}  
    \and Nayuta Yanagisawa\inst{3}  
}

\institute{
    IRIF-Universit\'{e} Paris-Diderot, France.\\ \email{\{cd,hf\}@irif.fr } \thanks{Supported by  LiDiCo.}
    \and Instituto de Matem\'aticas, UNAM,Mexico.\\ \email{rajsbaum@math.unam.mx} \thanks{Supported by UNAM-PAPIIT project IN104711.}
    \and Dept.\,of Mathematics, Graduate School of Science, Kyoto University, Kyoto.\\  \email{nayuta87@math.kyoto-u.ac.jp}
}

\authorrunning{C.\,Delporte, H.\,Fauconnier, S.\,Rajsbaum, and N.\,Yanagisawa} 
\toctitle{Lecture Notes in Computer Science}

\maketitle

\begin{abstract}
A \emph{task} is a distributed problem for $n$ processes, in which each process
starts with a private input value, communicates with other processes, 
and eventually decides an output value. 
A task is \emph{colorless} if each process can adopt the input or output value 
of another process. 
Colorless tasks are well studied in the non-anonymous shared-memory model
where each process has a distinct identifier that can be used to access 
a single-writer/multi-reader shared register. 
In the anonymous case, where processes have no identifiers and communicate through 
multi-writer/multi-reader registers, there is a recent topological characterization 
of the colorless tasks that are solvable when any number of asynchronous processes 
may crash. 

In this paper we study the case where at most $t$ processes may crash, where $1\leq t<n$. 
We prove that a colorless task is $t$-resilient solvable non-anonymously if and only if 
it is $t$-resilient solvable anonymously.  This implies a complete characterization of 
colorless anonymous $t$-resilient asynchronous task computability.
\end{abstract}

\noindent\textbf{Keywords:} Distributed problems, Formal specifications, Tasks, 
    Sequential specifications, Linearizability, Long-lived objects.


\section{Introduction}
The central results in distributed task computability is the 
\emph{asynchronous computability theorem} (ACT)~\cite{herlihy1999topological}.
It characterizes the tasks that are solvable in asynchronous shared-memory systems 
where $n$ processes that may fail by crashing communicate by writing and reading 
shared registers.  
It is sometimes called the \emph{wait-free} characterization, because any number of 
processes may crash, and the processes are asynchronous (run at arbitrary speeds, 
independent from each other).
The characterization is of an algebraic topology nature. 
A \emph{task} is represented as a relation $\Delta$ between an input complex $\cI$
and an output complex $\cO$. 
Each simplex $\sigma$ in $\cI$ is a set that specifies the initial inputs to the processes 
in some execution.
The processes communicate with each other, and eventually decide output values that 
form a simplex $\tau$ in $\cO$. 
The computation is correct, if $\tau$ is in $\Delta(\sigma)$. 
The complex $\cI$ (resp. $\cO$) is \emph{chromatic} because each simplex specifies 
not only  input values, but also which process gets which input (resp.~output) value. 
Roughly, the ACT characterization states that the task is solvable if and only 
if there is a  simplicial map $\delta$ from a chromatic subdivision of $\cI$ to $\cO$ 
respecting $\Delta$. 
The map $\delta$ is also chromatic, because it sends an input vertex corresponding to 
process $p_i$ to an output vertex corresponding to the same process $p_i$.

The ACT is the basis to obtain a characterization of distributed task computability in
the case where at most $t$ asynchronous processes may crash, $1\leq t<n$. 
Also, it is the basis to study other distributed computing models, parametrized by 
the failure, timing and communication model, and even mobile robot models~\cite{robotsIPDPS17}. 
There are basically two ways of extending the results from the wait-free model to 
other models. 
One by directly generalizing the algorithmic and topological techniques, and another 
by reduction to other models using simulations (either algorithmic~\cite{BorowskyGLR01,gafni2009extended} 
or topological~\cite{HerlihyR12,saraph2016asynchronous}).
An overview of results in this area can be found in the book~\cite{herlihy2013distributed}.

The theory of distributed computing in~\cite{herlihy2013distributed} assumes that 
the processes, $p_0,\ldots,p_{n-1}$, communicate using single-writer/multi-reader (SWMR) 
registers, $R_0,\ldots,R_{N-1}$.
Thus, $p_i$ knows it is the $i$-th process, and it can write exclusively to $R_i$. 
However, often processes, while they know their ids, the number of possible ids $N$ 
is much bigger than the number of processes, $n$. 
In this situation, preallocating a register for each identifier would lead to a distributed 
algorithm with a very large space complexity, namely $N$ registers. 
Instead, it is shown in~\cite{DFGRbootstrapTCS15} that $n$ multi-writer/multi-reader (MWMR) 
registers are sufficient to solve any read-write wait-free solvable task. 
 
However, in some distributed systems, processes are \emph{anonymous}; they  have no ids at all 
or they cannot make use  of their identifiers (e.g. due to privacy issues). 
Processes run identical programs, and the means by which processes access the shared memory 
are identical to all processes. 
A process cannot have a private register to which only this process may write, and hence
the shared memory consists only of MWMR registers.
This anonymous shared memory model of asynchronous distributed computing has been studied 
since early on~\cite{attiya2002computing,jayanti1991wakeup}, in the case where processes do not fail.
 
Only recently a characterization of the tasks that are wait-free solvable in the anonymous 
model has been given~\cite{yanagisawa2016wait-free}.
The characterization implies that the anonymity does not reduce the computational power of 
the asynchronous shared-memory model as long as \emph{colorless} tasks are concerned. 
Indeed, in an anonymous system, the task  specification must be colorless in the sense that 
it cannot refer to which process has which input or output value.
In consequence, the topological characterization is in terms of input and output complexes
which are not chromatic. 
Furthermore, the anonymous wait-free characterization matches exactly the eponymous
wait-free characterization for non-anonymous systems~\cite{herlihy2010topology,herlihy2017computingJ}.

\paragraph{Results}
Our main result is an extension of the wait-free characterization of~\cite{yanagisawa2016wait-free},
to the case where at most $t$ processes may crash, $1\leq t<n$.
We prove that a  colorless task is $t$-resilient solvable anonymously if and only if it is 
$t$-resilient solvable non-anonymously. 

Our main result is to show that if a colorless task is $t$-resilient solvable non-anonymously, 
then it is solvable anonymously using only $n$ MWMR registers.
The result is obtained through a series of reductions depicted in the figure below. 
We hope they provide useful basic tools to study further anonymous fault-tolerant computation.
First, we design an anonymous non-blocking implementation of an atomic \emph{weak set object} 
with $n$ registers.
The construction is based on the non-blocking atomic snapshot of~\cite{Ellen2008,guerraoui2007anonymous}.
Then we build a wait-free implementation, of a safe agreement object for an arbitrary value set $V$.
Our implementation is a generalization of the anonymous consensus algorithm proposed 
in~\cite{attiya2002computing}.
We then describe two ways of deriving the $t$-resilient anonymous solvability characterization.
One way is through a novel anonymous implementation of the BG-simulation~\cite{BorowskyGLR01},
which we use to simulate a non-anonymous system by an anonymous system, both $t$-resilient.
The other way is to use the safe-agreement object to solve $k$-set agreement, and then do
the topological style of analysis~\cite{HerlihyR12} and~\cite{yanagisawa2016wait-free}.
\tikzset{
    >=stealth',
    punkt/.style={
           rectangle,
           rounded corners,
           draw=black, very thick,
           text width=6.5em,
           minimum height=2em,
           text centered},
    pil/.style={
           ->,
           thick,
           shorten <=2pt,
           shorten >=2pt,}
}
\begin{wrapfigure}[21]{i}{-0.1\textwidth}
 \centering
\begin{tikzpicture}
  \node [punkt] (a) {MWMR atomic registers};
  \node [punkt] (b) [below=of a] {atomic weak set};
  \node [punkt] (c) [below=of b] {safe agreement};
    \node (dummy) [below=of c] {$\,$};
    \node [punkt] (d) [right=of dummy] {$k$-set agreement};
    \node [punkt] (e) [below=2 of c] {characterization of  $t$-resilient solvability};
      \draw[pil,->] (a) to node [right] {(non-blocking)} (b);
  \draw[pil,->] (b) to node {} (c);
  \draw[pil,->] (c) to node {} (d);
    \draw[pil,->,  right] (d) to node {} (e);
      \draw[pil,->,bend right=45] (c) to node [left] {\parbox{2.5cm}{anonymous  BG~simulation}} (e);
\end{tikzpicture}
\end{wrapfigure}

\paragraph{Related work}
Colorless tasks include many tasks such as consensus~\cite{fischer1985impossibility}, 
set agreement~\cite{Chaudhuri1993more}, and loop agreement~\cite{herlihy2003classification}, 
and have been widely studied (in the non-anonymous case). 
The first part of the book~\cite{herlihy2013distributed} is devoted to colorless tasks.
Colorless tasks were identified in~\cite{BorowskyGLR01} as the ones for which 
the BG-simulation works, and in~\cite{herlihy1997decidability} for the purpose of showing 
that they are undecidable in most distributed computing models. 
Not all tasks of interest are colorless though, and general tasks can be much harder to study, 
e.g.~\cite{CastanedaRenam:2011,gsb2016}.

A characterization of the colorless tasks that are solvable in the presence of processes 
that can crash in a dependent way is provided in~\cite{herlihy2010topology}, and a 
characterization when several processes can run solo is provided in~\cite{herlihy2017computingJ}. 
Both encompass the wait-free colorless task solvability characterization, and the former 
encompasses the $t$-resilient characterization that we use in this paper.

A certain kind of anonymity has been considered in~\cite{herlihy1999topological} to establish 
the anonymous computability theorem. 
However, they allow SWMR registers while we assume a fully anonymous model with only MWMR registers.

Anonymous  distributed computing remains an active research area since the shared-memory seminal 
papers~\cite{attiya2002computing,jayanti1991wakeup} and the  message-passing paper~\cite{angluin1980local}.
For some recent papers and references herein see, e.g.~\cite{CapdevielleAnonAg2017,GelashviliAnonCons2015}.

Closer to our paper is~\cite{guerraoui2007anonymous} where the anonymous asynchronous MWMR fault-tolerant 
shared-memory model is considered.
Our weak atomic set object provides an enhanced atomic implementation of the weak set object supporting 
non-atomic operations  presented in~\cite{delporte-gallet2009two}.
A set object that also supports a remove operation, but satisfies a weaker consistency condition, 
called per-element sequential consistency is presented in~\cite{baldoni2010value-based}. 

\paragraph{Organization}
In Section~\ref{sec:prelim} we briefly recall some of the notions used in this paper, about
the model of computation and the topology tools, both of which are standard.
In Section~\ref{sec:aws} we present the  anonymous  implementation of an atomic weak set object 
from MWMR registers.
In Section~\ref{sec:safeAg} we present the safe agreement implementation.
In Section~\ref{sec:charact} we derive our anonymous characterization of the $t$-resilient solvability 
of colorless tasks.


\section{Preliminaries}
\label{sec:prelim}
We assume a standard \emph{anonymous asynchronous shared-memory model}%
~\cite{guerraoui2007anonymous} consisting of $n$ sequential processes 
that have no identifiers and execute an identical code.
We assume tha at most $t$ of the processes may fail by crashing, where $1\leq t<n$.
Processes are asynchronous, i.e., they run at arbitrary speeds, independent from each other.
The processes communicate via multi-writer/multi-reader (MWMR) registers.
Let $R[0\ldots m-1]$ denote an array of $m$ registers. 
The read operation, denoted by $\textsc{read}$, returns the state of  $R[i]$. The write operation, 
denoted by $\textsc{write}(i,v)$, changes the state of $R[i]$ to $v$ and 
returns $ack$. 
The registers are assumed to be atomic (linearizable)~\cite{herlihy1990linearizability}. 
We assume that the registers are initialized to some default value.
We sometimes refer to the processes by unique names $p_0$, $\ldots$, $p_{n-1}$
for the convenience of exposition, but processes themselves have no means to
access these names.
Let us write $\Pi=\{p_0,\ldots,p_{n-1}\}$.

A \emph{complex} $\cK$ on a finite set $V(\cK)$ of vertices is a family of nonempty subsets 
of $V(\cK)$, called \emph{simplices}, such that $\{v\}\in\cK$ for every $v\in V(\cK)$. 
Also, $\cK$ is closed under containment, meaning that if $s\in\cK$ then $s' \in\cK$, 
for every $s' \subseteq s$.
A subset of a simplex $s$ is called a \emph{face} of $s$. 
The \emph{dimension} of $s$ is $\#s-1$, where $\#V$ denotes the cardinal of a set $V$.
A map $\phi:V(\cK)\to V(\cL)$, where $\cK$ and $\cL$ are complexes,
is said to be a \emph{simplicial map}, if $\phi(\sigma)\in\cL$ for every $\sigma\in\cK$.
We can associate any complex $\cK$ to the corresponding topological space
$|\cK|\subseteq\mathbb{R}^d$, for a sufficiently large integer $d$,
by embeding vertices of each simplex into $\mathbb{R}^d$
in an afinaly independent way and taking convex hull of these vertices.

The \emph{barycentric subdivision} of a complex $\cK$, denoted by $\bary\cK$,
is the complex such that $V(\cK)=\cK$ and 
a set $\{s_0,\ldots,s_i\}\subseteq \cK$ is a simplex of $\bary\cK$
if and only if $s_0,\ldots,s_i$ are totally ordered by containment.
The \emph{$b$-iterated barycentric subdivision}  of a complex $\cK$,
denoted by $\bary^b \cK$, is defined by $\bary(\bary^{b-1}\cK)$, where $\cK^0=\cK$.
The \emph{$k$-skeleton} of a complex $\cK$, denoted by $\skel^k \cK$,
is the complex whose simplices are the simpleces of $\cK$
of dimension less than or equal to $k$.

Let $I$ and $O$ be complexes. A \emph{carrier map} from $\cI$ to $\cO$ 
is a mapping $\Delta:\cI\to 2^\cO$ such that, for each $s\in\cI$, 
$\Delta(s)$ is a subcomplex of $\cO$ and $s'\subseteq s$ implies $\Delta(s')\subseteq \Delta(s)$.  
If a continuous map $f:|\cI|\to|\cO|$ satisfies $f(|\sigma|)\subseteq|\Delta(\sigma)|$
for all $\sigma\in\cI$, we say that $f$ is carried by $\Delta$.
If a simplicial map $\delta:\bary^b\cI\to\cO$ satisfies 
$\delta(\bary^b \sigma)\subseteq \Delta(\sigma)$ for all $\sigma\in\cI$,
we say that $\delta$ is \emph{carried by} $\Delta$.
As an immediate consequence of Lemma 3.7.8. of \cite{herlihy2013distributed},
the following lemma holds.
\begin{lemma}\label{thm:approximation}
    If $\Delta:\cI\to 2^\cO$ is a carrier map and $f:|\cI|\to|\cO|$ 
    is a continuous map carried by $\Delta$, then there is a non-negative integer $b$ 
    and a simplicial map $\delta:\bary^b\cI\to\cO$ carried by $\Delta$.
\end{lemma}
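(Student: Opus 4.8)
The plan is to read $\delta$ straight off the simplicial approximation theorem and then check the carrier condition by hand. First I would invoke the simplicial approximation theorem, which is the content of Lemma 3.7.8 of \cite{herlihy2013distributed}: since $f:|\cI|\to|\cO|$ is continuous and $\cI,\cO$ are finite complexes, there is a non-negative integer $b$ and a simplicial map $\delta:\bary^b\cI\to\cO$ that is a \emph{simplicial approximation} to $f$. The role of $b$ is to make the subdivision fine enough that $f$ carries each open star of a vertex of $\bary^b\cI$ into an open star of a vertex of $\cO$; this is where continuity and the compactness of $|\cI|$ enter, through a Lebesgue-number argument, and it is the part I would simply quote. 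The real work is to verify that this particular $\delta$ is carried by $\Delta$.

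The tool for the verification is the defining property of a simplicial approximation, stated in the form most convenient here. For a point $x\in|\bary^b\cI|$ write $\mathrm{carr}(x)$ for the unique simplex of $\bary^b\cI$ containing $x$ in its relative interior, and write $\mathrm{carr}(f(x))$ for the minimal simplex of $\cO$ containing the point $f(x)$. The star condition $f(\mathrm{st}(v))\subseteq\mathrm{st}(\delta(v))$ for each vertex $v$ translates, for $x$ in the relative interior of a simplex $\rho$, into the statement that every $\delta(v)$ with $v\in\rho$ is a vertex of the single simplex $\mathrm{carr}(f(x))$; hence $\delta(\rho)$ is a \emph{face} of $\mathrm{carr}(f(x))$, i.e.\ $\delta(\mathrm{carr}(x))\subseteq\mathrm{carr}(f(x))$.

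Now I would check $\delta(\bary^b\sigma)\subseteq\Delta(\sigma)$ for an arbitrary $\sigma\in\cI$. Fix $\rho\in\bary^b\sigma$ and choose $x$ in the relative interior of $|\rho|$, so that $\mathrm{carr}(x)=\rho$. Since $\rho$ is a simplex of $\bary^b\sigma$ we have $|\rho|\subseteq|\sigma|$, and because $f$ is carried by $\Delta$, $f(x)\in f(|\sigma|)\subseteq|\Delta(\sigma)|$. As $\Delta(\sigma)$ is a subcomplex of $\cO$, the minimal simplex of $\cO$ containing $f(x)$ already lies in $\Delta(\sigma)$, that is $\mathrm{carr}(f(x))\in\Delta(\sigma)$. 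Combining this with the approximation property gives $\delta(\rho)=\delta(\mathrm{carr}(x))\subseteq\mathrm{carr}(f(x))\in\Delta(\sigma)$, and since $\Delta(\sigma)$ is closed under containment, the face $\delta(\rho)$ of $\mathrm{carr}(f(x))$ is itself a simplex of $\Delta(\sigma)$. Letting $\rho$ range over all simplices of $\bary^b\sigma$ yields $\delta(\bary^b\sigma)\subseteq\Delta(\sigma)$, which is what ``$\delta$ is carried by $\Delta$'' means.

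The step I expect to be the crux is the passage from ``vertices'' to ``simplex'' in the last paragraph. A careless argument would only establish that each vertex $\delta(v)$ of $\delta(\rho)$ belongs to $\Delta(\sigma)$, which is not enough, since a subcomplex may contain all vertices of a simplex without containing the simplex. Routing the argument through the one carrier simplex $\mathrm{carr}(f(x))$ is exactly what forces $\delta(\rho)$ to be a face of a \emph{single} simplex of $\Delta(\sigma)$, so that closure under faces finishes the job. The remaining hypotheses of a carrier map (in particular $\Delta(\sigma)$ being a genuine subcomplex) are used only in this closure step, and the monotonicity of $\Delta$ is not needed for this pointwise verification.
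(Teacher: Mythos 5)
Your proof is correct. Note that the paper itself offers no proof of this lemma at all: it is stated as an immediate consequence of Lemma~3.7.8 of \cite{herlihy2013distributed}, and your argument is exactly the standard proof behind that citation --- the simplicial approximation theorem to produce $\delta$, plus the support argument $\delta(\mathrm{carr}(x))\subseteq\mathrm{carr}(f(x))$ to transfer the carrier condition. Your identification of the crux is also the right one: passing through the \emph{single} simplex $\mathrm{carr}(f(x))\in\Delta(\sigma)$, rather than checking vertices individually, is what makes $\delta(\rho)$ a face of a simplex of $\Delta(\sigma)$ and hence itself a simplex of $\Delta(\sigma)$.
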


A \emph{colorless task} is a triple $T = (\cI,\cO,\Delta)$, where $\cI$ and $\cO$ are simplicial complexes
and $\Delta$ is a carrier map. A colorless task $T$ is solvable, if for each input simplex $s\in\cI$, 
whenever each process $p_i$ starts with input value $v_i\in s$ (different processes may start with the same value), 
eventually it decides an output value $v'_i$, such that the set of output values form a simplex $s'\in\Delta(s)$.  
The colorless tasks that are fundamental to the present paper are the \emph{$b$-iterated barycentric agreement} 
and the \emph{$k$-set agreement}.
The $b$-iterated barycentric agreement is a colorless task $T=(\cI,\bary^b \cI,\bary^b)$, where we write by $\bary^b$ 
the carrier map that maps $s\in \cI$ to $\bary^b s$ for an abuse of notation.
The $k$-set agreement is a colorless task $T_k=(\cI,\skel^k \cI,\skel^k)$, where $\skel^k$ denotes the carrier map 
that maps a simplex $s\in \cI$ to the subcomplex $\skel^k \cI$.


\section{Atomic weak set }\label{sec:aws}
Here, we present an anonymous implementation of an atomic weak set object on an arbitrary value set $V$.

\subsection{Specification and Algorithm}

\newcommand{\weakset}{\ensuremath{\mathit{SET}}}
\newcommand{\ack}{\ensuremath{\mathit{ACK}}}

An \emph{atomic weak set} object, denoted by \weakset{}, is an atomic object used for storing values. 
The object supports only two operations, add and get, and has no remove operation, which is why 
it is called ``weak."
The add operation, denoted by \add{v}, takes an argument $v\in V$ and returns \ack{}.
The get operation, denoted by \get{}, takes no argument and returns the set of values 
that have  appeared as arguments in all the \add{} operations preceding the \get{} operation. 
We assume that \weakset{} initially holds no values, i.e., it is $\emptyset$.

We assume that a non-blocking atomic \emph{snapshot} object is available.
An implementation in  an anonymous setting with $n$ registers is described 
in~\cite{Ellen2008,guerraoui2007anonymous}.
The snapshot object exports two operations, \update{} and \scan{}.
Informally, a $scan()$  returns an array of $n$ values, which are contained in the array
of $n$ MWMR registers at some point in time between the invocation and the response of the 
$scan()$ operation.

We propose an anonymous non-blocking implementation of the atomic weak set object on $n$ MWMR registers.
The pseudocode of the implementation appears in Fig.~\ref{fig:aws}. 
If $S$ is an array of $n$ cells, we denote $vals(S)=\{S[i]| i \in \{1,...,n\}\}$.
The idea of the algorithm is as follows.  
To execute an \add{v} operation, the algorithm repeatedly tries to store the value $v$ 
in each one of the $n$ components of the snapshot object, using an $update$ operation (line~\ref{li:update1})
until it detects that $v$ appears in all the components. 
In each iteration, the algorithm deposits in the snapshot object not only $v$, but $View$, 
containing all the values known to be in the set so far, including $v$ itself. 
Once $v$ is detected to be in all components of the snapshot object, the \add{v} terminates.
The \get{} operation is similar, except that now the $View$ of the process has to appear 
in all the components of the snapshot for the operation to terminate.
Intuitively, once a value $v$ (or a set of values) appears in all $n$ components of the snapshot
object, it cannot be overwritten and go unnoticed by other processes, because the other processes
can be covering (about to write) at most $n-1$ components.
\begin{figure}[h]
\hrule \vspace{1mm} {\small
\setcounter{linenumber}{0}
\begin{tabbing}
bbb\=bbb\=bbb\=bbb\=bbb\=bbb\=bbb\=bbb\=bbb\=bbb\=bbb\=bbb\=  \kill

Shared variable : \\
\>\> \texttt{array of} $n$ \texttt{MWMR-register} :  $R[0\ldots n-1]$\\[2mm]

\textsc{Code for a process } \\ 
Local variable:\\
         \>\>\texttt{array of} $n$ \texttt{set of Values} $Snap[0\ldots n-1]$\\
	\>\>\texttt{set of Values} $View$ init $\emptyset$\\
	\>\>\texttt{integer} $next$\\[1mm]
Macro:\\
$vals(Snap)=\cup Snap[i]$\\[1mm]

\textsc{add}($v$):\\

\nnll\>$next=0$ \\

\nnll\label{li:scan1}\>$Snap= R.\scan{}$\\
\nnll\>$View= View \cup vals(Snap) \cup  \{v\}$\\
\nnll \label{li:while-add}\> \textbf{while} $(\# \{r |v $ in $ Snap[r]\} <  n)$\\

\nnll\label{li:update1}\>\>$R.\update{next,View}$\\
\nnll\>\>$next=(next + 1) \bmod{ n }$\\

\nnll\label{li:scan2}\>\>$Snap=R.\scan{}$\\
\nnll\>\> $View= vals(Snap) \cup View$\\
\nnll\>\textbf{return} $ACK$ \\

\\
\textsc{get}: \\
\nnll\>$next=0$ \\
\nnll\label{li:scan3}\>$Snap= R.\scan{}$\\
\nnll\>$View= vals(Snap) \cup View$\\
\nnll \label{li:while-get}\> \textbf{while} $(\# \{r |View= Snap[r]\} <   n)$\\

\nnll\label{li:update2}\>\>$R.\update{next,View}$\\
\nnll\>\>$next=(next+ 1) \bmod{ n }$\\

\nnll\label{li:scan4}\>\>$Snap= R.\scan{}$\\
\nnll\>\> $View= vals(Snap) \cup View$\\

\nnll\>\textbf{return} $View$ \\
\end{tabbing}
\vspace{-6mm}
\hrule}
\caption{non-blocking implementation of atomic weak set for $n$ processes.} 
\label{fig:aws}
\end{figure}

\subsection{Correctness of weak set object implementation}

\subsubsection{Safety}

Given an operation $op$,  $invoc (op)$ denotes its invocation and
$resp(op)$ its response.


Let $H$ be a history of the algorithm as defined in~\cite{herlihy1990linearizability}. 
$H_{seq}$ denotes the sequential history in which each operation of $H$ appears as 
if it has been executed at a single point (the linearization) of the time line.
We have to define linearization points and prove that:
\begin{itemize}
    \item the linearization point of each operation \get{} and \add{} appear between 
        the beginning and the end of this operation;
    \item the sequential history that we get with these points respect the sequential 
        specification of the weak set. 
\end{itemize}
Most of the details of the proofs are in the appendix.

Consider a history $H$, let $v$ be  a value or a set of values, define time $\tau_{v}$
as the first time, if any, that $v$ belongs to all registers of $R$. 
When there is no such time,  $\tau_v$ is $\bot$.

\begin{lemma}\label{lemma:terminaison}
    If the operation \add{v} terminates, then before the end of this operation $v$ belongs 
    to all registers of $R$. 
    If the operation \get{} terminates  and returns $V$, then before the end of this 
    operation $V$ belongs to all registers of $R$. 
\end{lemma}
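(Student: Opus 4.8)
The plan is to rely only on the atomicity of the snapshot object together with the exit condition of the two loops; no reasoning about the interleaving of histories beyond a single derived fact is needed. The one fact I would extract from linearizability of the snapshot is this: for every \scan{} operation there is a \emph{snapshot point}, lying strictly between its invocation and its response, at which the array $\mathit{Snap}$ it returns agrees componentwise with the true contents of the registers $R[0\ldots n-1]$.

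For the \add{v} claim I would reason as follows. Since the operation terminates, control leaves the loop, so the guard at line~\ref{li:while-add} is false for the current array $\mathit{Snap}$; that is, $\#\{r\mid v\in \mathit{Snap}[r]\}=n$, i.e. $v\in \mathit{Snap}[r]$ for every $r\in\{0,\ldots,n-1\}$. The array tested here is produced by the most recent \scan{} executed in the operation: the \scan{} at line~\ref{li:scan1} if the loop body is never entered, or the \scan{} at line~\ref{li:scan2} if at least one iteration runs. The intervening assignment to $\mathit{View}$ does not alter $\mathit{Snap}$, and in both cases this \scan{} completes before the return. At its snapshot point $R[r]=\mathit{Snap}[r]$ for all $r$, hence $v\in R[r]$ for all $r$; since the snapshot point precedes the response of the \scan{}, which precedes the response of \add{v}, the value $v$ belongs to all registers before the end of the operation.

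The \get{} case has the same shape. Termination forces the guard at line~\ref{li:while-get} to be false, so $\mathit{View}=\mathit{Snap}[r]$ for every $r$, where $\mathit{View}$ is exactly the set $V$ that is then returned (no assignment separates the failing guard from the return). The responsible \scan{} is the one at line~\ref{li:scan3} or at line~\ref{li:scan4}, both finishing before the return, and at its snapshot point $R[r]=\mathit{Snap}[r]=V$ for all $r$, so $V$ belongs to all registers before the end of the \get{}.

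The difficulties I anticipate are formal rather than conceptual. The essential step is the precise use of snapshot linearizability: I need a single common instant at which the returned array equals the actual register contents, not merely the weaker assertion that each value was individually present at some (possibly distinct) time, so the argument must invoke the atomicity of \scan{} exactly. The remaining care is bookkeeping: treating the zero-iteration and positive-iteration cases uniformly, and matching each failing guard to the \scan{} immediately preceding it, so as to avoid an off-by-one slip between guard evaluation and the scan that produced the array it inspects.
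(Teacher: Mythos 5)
Your proof is correct and follows essentially the same route as the paper's: termination forces the loop guard at line~\ref{li:while-add} (resp.\ line~\ref{li:while-get}) to be false, so $v$ (resp.\ $V$) fills every cell of $\mathit{Snap}$, and since $\mathit{Snap}$ came from the atomic \scan{} at line~\ref{li:scan1}/\ref{li:scan2} (resp.\ \ref{li:scan3}/\ref{li:scan4}), the value was in all registers at that scan's linearization point, which precedes the operation's response. Your explicit appeal to the snapshot's linearization point just makes precise what the paper states as ``when the process executes this line, $v$ belongs to all registers of $R$.''
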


By Lemma~\ref{lemma:terminaison}, $\tau_v$ is not $\bot$ for each operation
\add{v} that terminates and $\tau_V$ is also not $\bot$ for each operation \get{} that
terminates  and returns $V$.

\paragraph {\it Linearization points for operations \add{} and \get{}}:
\begin{itemize} 
    \item $op=\textsc{add}(v)$:
        If $\tau_v\neq \bot$, the linearization point $\tau_{op}$ of an
        operation $op=\textsc{add}(v)$ is $max\{ \tau_{v}, 
        invoc(op) \}$.
        If $\tau_v=\bot$, the operation $op$ does not terminate and   
        is not linearized.
    \item
        $op=\textsc{get}()$:
        The linearization point $\tau_{op}$ of an operation
        $op=\textsc{get}()$ that returns $V$ is $max\{ \tau_{V}, 
        invoc(op) \}$. A $\textsc{get}()$ operation that does not terminate is
        not linearized. 
\end{itemize}
Directly from the definition and Lemma\ref{lemma:terminaison}, the linearization point 
$\tau_{op}$ appears between the invocation and the response of $op$:
\begin{lemma}
    Let $op$ be an operation of $H$. If $\tau_{op}$ is defined, then $
    \tau_{op}$ belongs to $[invoc(op), resp(op)]$. 
    If $\tau_{op}$ is undefined, then $op$ does not terminate and is not linearized.
\end{lemma}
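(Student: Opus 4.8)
The plan is to verify the claim directly from the definition of the linearization points $\tau_{op}$, splitting into the cases $op=\add{v}$ and $op=\get{}$, and invoking Lemma~\ref{lemma:terminaison} only to control the upper endpoint $resp(op)$. The lower endpoint comes for free: in both cases $\tau_{op}$ is defined as $\max\{\tau_v, invoc(op)\}$ (resp.\ $\max\{\tau_V, invoc(op)\}$), so $\tau_{op} \ge invoc(op)$ holds by construction, with no appeal to any lemma.

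For the upper endpoint I would argue as follows. Suppose $op$ terminates, so $resp(op)$ is well defined. If $op=\add{v}$, Lemma~\ref{lemma:terminaison} guarantees that $v$ lies in every register of $R$ at some instant no later than $resp(op)$; since $\tau_v$ is by definition the \emph{first} such instant, $\tau_v \le resp(op)$. Together with the obvious $invoc(op)\le resp(op)$ this yields $\tau_{op}=\max\{\tau_v, invoc(op)\}\le resp(op)$, and hence $\tau_{op}\in[invoc(op),resp(op)]$. The case $op=\get{}$ returning $V$ is identical with $V$ in place of $v$: Lemma~\ref{lemma:terminaison} gives $\tau_V\le resp(op)$, whence $\tau_{op}\le resp(op)$.

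For the second assertion I would simply read off the definitions. For a \get{} that does not terminate, the definition declares it unlinearized, so $\tau_{op}$ being undefined is equivalent to non-termination. For an \add{v}, $\tau_{op}$ is undefined exactly when $\tau_v=\bot$; the contrapositive of Lemma~\ref{lemma:terminaison} (a terminating \add{v} forces $v$ into all registers before its response, i.e.\ $\tau_v\neq\bot$) shows that $\tau_v=\bot$ implies $op$ does not terminate, matching the definition, which then leaves $op$ unlinearized.

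The one point that needs care — and the only place the bare interval notation is delicate — is a \emph{pending} \add{v} for which $\tau_v\neq\bot$ but the operation never returns (another process may have driven $v$ into all registers). Here $\tau_{op}$ is defined while $resp(op)$ is not, so I would note that the upper-endpoint constraint is vacuous and only $\tau_{op}\ge invoc(op)$ is required, which was already established; this is exactly the standard linearizability convention for completing pending operations, and it is the sole subtlety in an otherwise immediate argument.
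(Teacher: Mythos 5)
Your proof is correct and follows essentially the same route as the paper, which simply asserts that the lemma holds ``directly from the definition and Lemma~1'': the lower endpoint is immediate from the $\max$ in the definition of $\tau_{op}$, and the upper endpoint (and the non-termination claim, via contrapositive) comes from Lemma~\ref{lemma:terminaison} exactly as you argue. Your extra remark about pending \textsc{add}$(v)$ operations whose $\tau_v\neq\bot$ is a careful elaboration the paper leaves implicit, but it does not change the approach.
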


In the following, we consider the next snapshot operation on $R$ of each process.
This next snapshot operation is either a \scan{} or an \update{} or there is no next 
snapshot operation (when the process is about to satisfy the termination loop condition: 
$\#\{r|v \in Snap[r] \}\geq n$ for an \add{} and $\#\{r|View= Snap[r] \}\geq n$ for an \update{}).

\noindent
Consider any time $\tau$ and define $r_v(\tau)$, $w_v(\tau)$, $c_v(\tau)$
and $\alpha_v(\tau)$ as follows:
\begin{itemize}
    \item $r_v(\tau)$ is the number of processes for which, after time
        $\tau$, the next snapshot operation is a \scan{};
    \item $w_v(\tau)$ is the number of processes such that (1) $v \in View$ at time $\tau$ 
        and for which after time $\tau$ the next snapshot operation is an \update{},
        or (2) there is no next snapshot operation for that process (the process has finished
        -or is going to finish- its main loop or it takes no more steps);
    \item $c_v(\tau)$ is the number of registers that contains $v$ at time $\tau$;
    \item $\alpha_v(\tau)$ is defined by $\alpha_v(\tau)=  r_v(\tau)+w_v(\tau) +c_v(\tau)$.
\end{itemize}

\noindent
As soon as $\alpha_v(\tau) >n$, $\alpha_v(\tau) $ is not decreasing:
\begin{lemma}\label{lemma:increasingstep}
    Assume $\alpha_v(\tau) >n$ and the next step in $H$, is made at time $\tau'\geq \tau$,
    then we have  $\alpha_v(\tau)\leq \alpha_v(\tau')$.
\end{lemma}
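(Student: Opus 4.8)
The plan is to fix the value $v$ and track how the single step taken at time $\tau'$ changes the three counts $r_v$, $w_v$ and $c_v$. A step is performed by one process $p$ and touches only $p$'s local state together with at most one register (the one $p$ writes, if it writes at all); so the contribution of every other process to $r_v+w_v$ is unchanged, and $c_v$ moves by at most one. It therefore suffices to weigh the change in $p$'s own contribution to $r_v+w_v$ against the change in $c_v$. I adopt the standard convention that a step is a snapshot operation (a \scan{} or an \update{}) together with the deterministic local computation it triggers: the union into $View$, the loop-guard test that fixes the next pending operation, and the increment of $next$. The only leading computation not attached to a snapshot operation is the initial $next=0$, which alters neither the registers, nor $View$, nor the fact that $p$'s next operation is a \scan{}, so it leaves $\alpha_v$ unchanged; the two cases ``$p$ performs an \update{}'' and ``$p$ performs a \scan{}'' are thus exhaustive.

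I would first dispose of the \update{} case, which needs no hypothesis on $\alpha_v$. Before the step $p$'s next operation is an \update{}, so $p$ contributes to $r_v+w_v$ exactly when $v\in View$ (through $w_v$, case~1); after the step $p$ is about to \scan{}, so it contributes to $r_v$. If $v\in View$, then $p$ writes a set containing $v$ into its target component, so $c_v$ cannot drop while $p$ merely moves from $w_v$ to $r_v$, keeping $r_v+w_v$ fixed. If $v\notin View$, then $p$ writes a set without $v$, lowering $c_v$ by at most one, but $p$ moves from ``neither'' into $r_v$, raising $r_v+w_v$ by one, and the two changes cancel. In every subcase $\alpha_v(\tau')\ge\alpha_v(\tau)$.

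The \scan{} case is where the hypothesis $\alpha_v(\tau)>n$ is indispensable, and it is the main obstacle. Before the scan $p$ contributes $1$ to $r_v$, and since a \scan{} modifies no register, $c_v$ is unchanged across the step. The only way $\alpha_v$ could drop is if $p$ lands in neither $r_v$ nor $w_v$ afterwards, which happens precisely when $v\notin View$ and $p$'s next operation is an \update{}. I would exclude this as follows. The process sets counted by $r_v$ and by $w_v$ are disjoint, since each process has a unique next snapshot operation---a \scan{}, an \update{}, or none---so $r_v(\tau)+w_v(\tau)\le n$. Combined with $\alpha_v(\tau)>n$ this forces $c_v(\tau)\ge 1$: some register already holds a set containing $v$ at time $\tau$. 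Because a \scan{} changes no register, that register still holds $v$ throughout $[\tau,\tau']$, so the snapshot $Snap$ read by $p$ satisfies $v\in vals(Snap)$, and after the ensuing assignment, $View\supseteq vals(Snap)$ gives $v\in View$. Hence $p$ lands in $w_v$ (case~1 if its next operation is an \update{}, case~2 if it has none): it merely moves from $r_v$ to $w_v$, and with $c_v$ unchanged we obtain $\alpha_v(\tau')=\alpha_v(\tau)$.

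Putting the cases together proves the lemma. The crux is the \scan{} step: the argument hinges on converting the hypothesis $\alpha_v(\tau)>n$ into the concrete fact $c_v(\tau)\ge 1$ via the disjointness bound $r_v+w_v\le n$, and then exploiting the read-only nature of a \scan{} to guarantee that the scanning process must observe a register carrying $v$ and record $v$ in its own $View$---thereby preventing it from ``falling out'' of both $r_v$ and $w_v$.
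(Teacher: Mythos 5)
Your proof is correct and takes essentially the same route as the paper's: a case split on whether the next step is an \update{} or a \scan{}, with the \update{} case settled by direct accounting (no hypothesis needed) and the \scan{} case settled by combining the disjointness bound $r_v+w_v\le n$ with $\alpha_v(\tau)>n$ to force $c_v(\tau)\ge 1$, so the scanning process must absorb $v$ into its $View$ and cannot fall out of $r_v+w_v$. The only harmless imprecision is that after an \update{} the process may have no next snapshot operation (it crashes), and after a \scan{} its next snapshot operation may be the initial \scan{} of a later invocation --- subcases the paper spells out explicitly --- but in each of them the process still contributes exactly $1$ to $r_v+w_v$, so your inequalities are unaffected.
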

\noindent
And by an easy induction on the steps of $H$ we get:
\begin{lemma}\label{lemma:increasing0}
    If $\alpha_v(\tau) > n$  then for all $\tau'$, such that $\tau \leq \tau' $, 
    $\alpha_v(\tau) \leq \alpha_v(\tau')$.
\end{lemma}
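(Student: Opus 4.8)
The plan is to prove Lemma~\ref{lemma:increasing0} as the transitive closure of the single-step Lemma~\ref{lemma:increasingstep}, by a straightforward induction on the steps of $H$ lying in the interval $[\tau,\tau']$. First I would record the key structural observation that the three quantities $r_v$, $w_v$, and $c_v$ — and hence $\alpha_v$ — depend only on the global configuration of the system: the register contents (for $c_v$) and, for each process, its local value of $View$ together with the type (a \scan{} or an update) of its next pending snapshot operation (for $r_v$ and $w_v$). None of these data change except when some process takes a step, so $\alpha_v(\cdot)$ is constant on every interval that contains no step of $H$. Since only finitely many steps occur between $\tau$ and any given $\tau'$, it therefore suffices to establish the inequality at the discrete times at which steps occur.

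Enumerate the steps of $H$ occurring at or after $\tau$ as taking place at times $\tau=\sigma_0<\sigma_1<\sigma_2<\cdots$, and take as inductive hypothesis the inequality $\alpha_v(\tau)\le\alpha_v(\sigma_k)$. The base case $k=0$ is the trivial identity $\alpha_v(\tau)\le\alpha_v(\tau)$. For the inductive step, the hypothesis together with the assumption $\alpha_v(\tau)>n$ gives $\alpha_v(\sigma_k)\ge\alpha_v(\tau)>n$, so the premise of Lemma~\ref{lemma:increasingstep} is satisfied at time $\sigma_k$; applying that lemma to the step taken at $\sigma_{k+1}$ yields $\alpha_v(\sigma_k)\le\alpha_v(\sigma_{k+1})$, and chaining with the hypothesis gives $\alpha_v(\tau)\le\alpha_v(\sigma_{k+1})$. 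This closes the induction. Finally, for an arbitrary real time $\tau'\ge\tau$, let $\sigma_k$ be the last step time with $\sigma_k\le\tau'$; by the constancy observation $\alpha_v(\tau')=\alpha_v(\sigma_k)\ge\alpha_v(\tau)$, which is the desired conclusion.

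The only point requiring care — and the reason the hypothesis $\alpha_v(\tau)>n$ cannot be dropped — is that Lemma~\ref{lemma:increasingstep} may be invoked only so long as $\alpha_v$ stays above $n$. The induction is self-sustaining precisely because a quantity that starts above $n$ and never decreases stays above $n$: the inequality $\alpha_v(\sigma_{k+1})\ge\alpha_v(\tau)>n$ re-establishes the premise for the next step. Thus there is no genuine obstacle beyond bookkeeping; the substantive thing to verify is the constancy of $\alpha_v$ between consecutive steps, which is what reduces this continuous-time statement to the discrete single-step lemma already in hand.
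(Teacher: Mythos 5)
Your proof is correct and follows exactly the route the paper takes: the paper dispatches this lemma as ``an easy induction on the steps of $H$'' using the single-step Lemma~\ref{lemma:increasingstep}, and your argument is precisely that induction, spelled out with the two bookkeeping points (constancy of $\alpha_v$ between steps, and the self-sustaining invariant $\alpha_v > n$) that the paper leaves implicit.
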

When $\tau_v$ is defined, we can verify that $\alpha_v(\tau) > n$ then:
\begin{lemma}\label{lemma:increasing}
    If $\tau_v\neq \bot$ then   for all $\tau$,  such that $\tau_v\leq
    \tau $, $n<\alpha_v(\tau_v)\leq \alpha_v(\tau)$.
\end{lemma}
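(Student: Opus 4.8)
The plan is to reduce the whole statement to the single strict inequality $n < \alpha_v(\tau_v)$, since the remaining inequality $\alpha_v(\tau_v) \le \alpha_v(\tau)$ for all $\tau \ge \tau_v$ then follows at once from Lemma~\ref{lemma:increasing0} applied at $\tau = \tau_v$. So the entire content lies in showing that $\alpha_v$ already exceeds $n$ at the critical time $\tau_v$. Because $\tau_v \neq \bot$, by definition $\tau_v$ is the first time at which $v$ belongs to \emph{every} register of $R$, so $c_v(\tau_v) = n$. It therefore suffices to prove that $r_v(\tau_v) + w_v(\tau_v) \ge 1$, which yields $\alpha_v(\tau_v) = r_v(\tau_v) + w_v(\tau_v) + c_v(\tau_v) \ge 1 + n > n$.

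To obtain $r_v(\tau_v) + w_v(\tau_v) \ge 1$ I would first pin down the step that takes place at $\tau_v$. Initially no register contains $v$, a \scan{} never modifies a register, and each \update{} modifies exactly one register; hence $c_v$ increases by at most one per step. Consequently the first instant at which $c_v$ reaches $n$, namely $\tau_v$, must be produced by an \update{} performed by some process $P$ that writes a set $View$ with $v \in View$ into the unique register that still lacked $v$. In particular $v \in View$ holds for $P$ at time $\tau_v$.

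Next I would track $P$'s next snapshot operation after that completing write. Inspecting the pseudocode of both \add{} and \get{}, every \update{} inside the main loop (lines~\ref{li:update1} and~\ref{li:update2}) is immediately followed, after the non-snapshot increment of $next$, by a \scan{} (lines~\ref{li:scan2} and~\ref{li:scan4}). Thus immediately after its update at $\tau_v$, process $P$ falls into exactly one of two cases. Either $P$ performs at least one further step, in which case its next snapshot operation is a \scan{}, so $P$ is counted in $r_v(\tau_v)$ and $r_v(\tau_v) \ge 1$; or $P$ takes no more steps, in which case it has no next snapshot operation and is counted in $w_v(\tau_v)$ through clause~(2), so $w_v(\tau_v) \ge 1$. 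There is no double counting, since $P$ contributes to $r_v$ only when its next snapshot operation exists and is a \scan{}. In either case $r_v(\tau_v) + w_v(\tau_v) \ge 1$, giving $\alpha_v(\tau_v) \ge n+1 > n$, and Lemma~\ref{lemma:increasing0} finishes the argument.

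The main obstacle is precisely this case analysis for the completing process $P$: one must argue that its next snapshot operation, if it exists, is necessarily a \scan{} rather than an \update{}, and simultaneously handle the possibility that $P$ crashes (or otherwise halts) right after the completing write, so that it migrates from the $r_v$ term into the $w_v$ term rather than disappearing. Getting the boundary conventions right is the delicate point — the completing write itself is accounted for at $\tau_v$ inside $c_v$, while the operation $P$ performs \emph{after} $\tau_v$ is the ``next'' one contributing to $r_v$ — so that the two terms $c_v$ and $r_v$ genuinely stack rather than overlap. Everything else is routine bookkeeping together with the appeal to the preceding lemma.
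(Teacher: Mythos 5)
Your proposal is correct and follows essentially the same route as the paper's own proof: establish $c_v(\tau_v)=n$, observe that the process whose \update{} completed the placement of $v$ in the last register contributes at least $1$ to $r_v(\tau_v)+w_v(\tau_v)$ (its next snapshot operation is a \scan{}, or it has none and falls under clause~(2) of the definition of $w_v$), and then invoke Lemma~\ref{lemma:increasing0}. Your treatment is in fact slightly more careful than the paper's, which only sketches the identification of the completing \update{} and states the bound ``just after'' $\tau_v$ rather than at $\tau_v$ itself.
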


\noindent
From the previous Lemmas we may deduce:
\begin{lemma}\label{lemma:safe}
$H_{seq}$ satisfies the sequential specification of the weak set.
\end{lemma}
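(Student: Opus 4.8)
The plan is to reduce the sequential specification of the weak set to a single property: every \get{} returns exactly the set of values that are arguments of the add operations ordered before it in $H_{seq}$. Since the object has no remove operation and every add returns $ACK$, nothing else needs to be checked. So I fix a \get{} operation $g$ returning $V$ and linearized at $\tau_g=\max\{\tau_V,invoc(g)\}$, with $\tau_V\neq\bot$ by Lemma~\ref{lemma:terminaison}. Let $\tau^{*}$ be the linearization point of the atomic scan on which $g$ terminates, so that at $\tau^{*}$ every register equals $V$; since $invoc(g)\le\tau^{*}$ and $\tau_V\le\tau^{*}$, we also have $\tau_g\le\tau^{*}$. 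It remains to prove the two inclusions between $V$ and the set of values argued by add operations ordered before $g$.

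For completeness, suppose an add operation $a$ with argument $v$ is ordered before $g$, i.e.\ it is linearized at $\tau_a=\max\{\tau_v,invoc(a)\}<\tau_g$; as $a$ is linearized we have $\tau_v\neq\bot$, and $\tau_v\le\tau_a<\tau_g\le\tau^{*}$. Lemma~\ref{lemma:increasing} then gives $\alpha_v(\tau^{*})\ge\alpha_v(\tau_v)>n$. The processes counted by $r_v(\tau^{*})$ and those counted by $w_v(\tau^{*})$ are disjoint subsets of the $n$ processes, because a process's next snapshot operation is either a scan, or an update, or absent; hence $r_v(\tau^{*})+w_v(\tau^{*})\le n$, which forces $c_v(\tau^{*})\ge1$, i.e.\ some register holds $v$ at $\tau^{*}$. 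Since every register equals $V$ at $\tau^{*}$, this yields $v\in V$. The entire covering intuition is thus absorbed by Lemma~\ref{lemma:increasing}; here I only add the complementary upper bound $r_v+w_v\le n$.

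For soundness, take $v\in V$. Because all registers equal $V\ni v$ at $\tau_V$, we have $\tau_v\le\tau_V\le\tau_g$ and $\tau_v\neq\bot$. As the registers start empty and a value can enter a process's $View$ only as the argument of an add operation or by being read from a register, tracing $v$ back to its first appearance produces an add operation $a$ with argument $v$ and $invoc(a)\le\tau_v$; by the definition of the linearization point, $a$ is linearized at $\max\{\tau_v,invoc(a)\}=\tau_v\le\tau_g$. Ordering add operations before \get{} operations whenever their linearization points coincide, $a$ precedes $g$ in $H_{seq}$, so $v$ indeed lies in the set that $g$ must return. The two inclusions together establish the desired property, and the monotonicity of the sets returned by different \get{} operations follows immediately from the absence of a remove operation.

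I expect the completeness direction to be the crux, although the difficulty is careful bookkeeping of time points rather than a new idea: the genuinely hard content---that a value deposited in all $n$ registers cannot later be hidden from a scanner by covering processes---is already captured by the monotonicity of $\alpha_v$ in Lemma~\ref{lemma:increasing}. The two places needing care are (i) evaluating $\alpha_v$ exactly at the scan point $\tau^{*}$ witnessing ``all registers equal $V$'', so that $c_v(\tau^{*})=0$ when $v\notin V$, and (ii) coupling $\alpha_v(\tau^{*})>n$ with the bound $r_v+w_v\le n$, without which the inequality is vacuous. A secondary subtlety arises in soundness: an add operation may be linearized even if its process crashes before returning, because $\tau_v$ is a global quantity, so the originating add must be located by tracing the value rather than by assuming that operation terminates.
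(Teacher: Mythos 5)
Your proof is correct and takes essentially the same route as the paper's: the same linearization points, the same two inclusions, the same tracing argument for soundness, and for completeness the same key step of combining Lemma~\ref{lemma:increasing} with the bound $r_v(\tau)+w_v(\tau)\le n$ to force $c_v\ge 1$ at the final scan of the \get{} (the paper packages exactly this combination as a separate appendix lemma, which you simply inline). No gap to report.
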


\subsubsection{Liveness}

We prove that the algorithm is non-blocking, namely, if processes perform operations forever, 
an infinite number of  operations terminates.

By contradiction, assume that there is only a finite number of operations \get{} and \add{} 
and some operations made by correct processes do not terminate. 

Operations \add{} or \get{} may not terminate because the termination conditions of the while
loop are not satisfied (Lines~\ref{li:while-add} or~\ref{li:while-get}): 
for an \add{v} operation, in each \scan{} made by the process, $v$ is not in at least one of 
the registers of $R$, and for a \get{} operation, in each $snap$, all the registers are not
equal to the view of the process.

There is a time $\tau_0$ after which there is no new process crash and all 
processes that terminate \get{} or \add{} operations in the run have already 
terminated. 
Consider the set $N$ of processes alive after time $\tau_0$ that do not terminate 
operations in the run. 
Note that after time $\tau_0$ only processes in $N$ take steps and as no process 
in $N$ may crash each process in $N$ makes an infinite number of steps. 

We notice that all values in  variables $View$ have been proposed by some \get{}.
If there is a finite number of operations, then all variables $View$ are subsets 
of a finite set of values.
Moreover,
considering the inclusion $\subseteq$, the views of each process are increasing, 
then there is a time $\tau_1>\tau_0$ after which the view of each process $p$ 
in $N$ converges to a \emph{stable view} $SView_p$: 
after time $\tau_1$ forever  the view of $p$ is $SView_p$.
In the following $SV$ denotes $\{SView_p| p\in N\}$ the set of all stable
views for processes in $N$. Observe that:
\begin{observation}\label{obs:vinView}
    If  $p$ does not terminate an $\textsc{add}(v)$ then $v\in SView_p$.
\end{observation}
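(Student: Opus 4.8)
The plan is to use two simple facts about the local variable $View$---that it is \emph{persistent} across a process's operations and that it only ever \emph{grows}---together with the observation that \add{v} writes $v$ into $View$ at its very beginning.

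First I would note that $View$ is declared once, initialized to $\emptyset$, and is never reset at the start of an operation (the first instruction of both \add{} and \get{} sets only $next$). Moreover every assignment to $View$ in the code has the form $View \leftarrow View \cup (\cdots)$: in \add{} the two updates are $View \leftarrow View \cup vals(Snap) \cup \{v\}$ and $View \leftarrow vals(Snap) \cup View$, and the updates in \get{} have the same shape. Hence, ordered by inclusion, the value of $View$ at any fixed process is non-decreasing over time; once an element enters $View$ it stays there forever.

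Next I would examine the non-terminating \add{v} executed by $p$. Right after the initial \scan{} the process performs $View \leftarrow View \cup vals(Snap) \cup \{v\}$, so from that instant on $v \in View$. By the monotonicity just established, $v$ belongs to $View$ at every later time. Finally, since $p \in N$ does not terminate this \add{v}, it stays trapped in the while loop of the operation and keeps taking steps past $\tau_1$; at any such step its view is already stabilized to $SView_p$ while still containing $v$. Therefore $v \in SView_p$.

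The argument has no genuinely hard step. The only point needing a little care is the timing: one must check that the instruction inserting $v$ into $View$ is executed no later than some moment after $\tau_1$. This holds because $p$ takes infinitely many steps after $\tau_0$, and that instruction precedes the loop in which $p$ is forever trapped, so some step of the loop---at which $View = SView_p$---occurs after $v$ has been inserted.
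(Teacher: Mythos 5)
Your proof is correct and takes essentially the same route the paper intends: the paper treats this observation as immediate from its preceding observations that every assignment to $View$ is of the form $View \leftarrow View \cup (\cdots)$ (so views only grow) and that a process stuck in \add{v} must have executed the line $View \leftarrow View \cup vals(Snap) \cup \{v\}$ before entering the while loop in which it is trapped, which is exactly the argument you spell out.
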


After time $\tau_1$ processes only update $R$ with their stables views $SView$, 
and as each process in $N$ updates infinitely often each register of $R$ with
its $SView$ there is a time $\tau_2$ after which all registers in $R$ contain 
only stable views of processes in $N$:
\begin{observation}\label{obs:st2}
    After time $\tau_2$ for all $i$, $R[i]\in SV$.
\end{observation}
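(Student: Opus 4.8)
The plan is to show that after time $\tau_1$ every write into a register deposits a stable view, and that each register is overwritten often enough that whatever stale contents it held at $\tau_1$ (an old non-stable view, or the initial default value) are eventually flushed out; the threshold past which the flushing is complete will be the $\tau_2$ we must exhibit.

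First I would note that after $\tau_1$ only processes in $N$ take steps: by the choice of $\tau_0\le\tau_1$, every other process has either crashed or already terminated all its operations. Moreover each $p\in N$ has $View=SView_p$ from $\tau_1$ onward. Since a register is modified only by an \update{} (line~\ref{li:update1} in \textsc{add} or line~\ref{li:update2} in \textsc{get}), and each \update{} writes the current value of $View$, every write occurring after $\tau_1$ deposits some element of $SV$.

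Next I would establish that each register is written infinitely often after $\tau_1$. Each process in $N$ never exits its main loop, hence executes infinitely many iterations, each performing exactly one \update{} and then setting $next=(next+1)\bmod n$. Over any $n$ consecutive iterations of a single process the index $next$ thus ranges over all of $\{0,\ldots,n-1\}$, so every register is targeted; with infinitely many iterations, each register receives infinitely many writes after $\tau_1$. In particular, for each index $i$ there is a first instant $t_i>\tau_1$ at which $R[i]$ is written.

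Finally I would set $\tau_2=\max_{0\le i\le n-1} t_i$, which is well defined since there are only $n$ indices. At any time after $\tau_2$ the most recent write to $R[i]$ took place after $\tau_1$ and therefore deposited an element of $SV$; as every subsequent write is likewise an element of $SV$, the contents of $R[i]$ stay in $SV$ from then on, giving $R[i]\in SV$ for all $i$, as claimed. I do not expect a deep obstacle; the only point needing care is the round-robin covering argument, i.e.\ that a non-terminating process really touches every register index infinitely often rather than stalling on a subset, which is guaranteed by the unconditional increment $next=(next+1)\bmod n$ performed once per loop iteration, independently of the termination test, so that no register can be permanently skipped.
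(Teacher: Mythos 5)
Your proof is correct and takes essentially the same route as the paper's (inline, one-sentence) justification: every \update{} performed after $\tau_1$ deposits a stable view since only processes in $N$ still take steps and their views have stabilized, and the unconditional round-robin increment of $next$ ensures every register is written after $\tau_1$, so any $\tau_2$ past the first such write to each register works. Your explicit construction of $\tau_2$ as the maximum over $i$ of the first post-$\tau_1$ write time to $R[i]$ merely makes precise what the paper asserts in passing.
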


Among stable views consider any minimal view $SView_0$ for inclusion,
i.e. for all $S\in SV$, $S\subseteq SView_0$ implies $SView_0=S$.

Consider any process $p\in N$ having the $SView_0$ as stable view, eventually 
$p$ makes a scan of the memory $R$ (Line~\ref{li:scan1} or~\ref{li:scan2} 
for \add{},  Line~\ref{li:scan3} or~\ref{li:scan4} for \get{}). 
Let $Snap$ be the array returned by the $scan$.  $Snap$ is the value of 
the array of registers $R$ at some time after $\tau_2$. 
Then $p$ adds $\bigcup_{1\leq i\leq n}  Snap[i]$ to its view $SView_0$.
The $SView_0$ being stable we have $\bigcup_{1\leq i\leq n} Snap[i]\subseteq SView_0$ 
and then for all $i$, $Snap[i] \subseteq   SView_0$.
But by Observation~\ref{obs:st2}, $R[i]=Snap[i]$ is a stable view $S \in SV$. 
Then by the minimality of $SView_0$,  for all $i$ we have $Snap[i]=SView_0$.
Then consider the two following cases:
\begin{itemize}
    \item
        if $p$ is performing an $\textsc{add}(v)$, as $v \in SView_p=SView_0$, 
        for all $i$, $v\in Snap[i]$ and the loop condition $(\#\{r |v $ in
        $ Snap[r]\} <  n)$ (Line~\ref{li:while-add}) is false and $p$
        terminates $\textsc{add}(v)$-- A contradiction
    \item
        if $p$ is performing an $\textsc{get}()$, then the loop continuation condition 
        $(\# \{r |View= Snap[r]\} <   n)$ is false and $p$ terminates
        operation $\textsc{get}()$ --- A contradiction
\end{itemize}
We deduce that there is no minimal stable view proving that $SV=\emptyset$
and also $N=\emptyset$.

 

\section{Safe Agreement Object}\label{sec:safeAg}

An \emph{safe agreement object}~\cite{BorowskyGLR01} provides two operations, 
\propose{} and \resolve{}.
A \propose{} operation takes an argument $v\in V$ and returns \textit{ACK}.
A \resolve{} operation takes no argument and returns $u\in V$ or $\perp$.
The safe agreement object is \emph{one-shot}, i.e., each process may perform 
at most one \propose{} operation on each object, while an arbitrary number of 
\resolve{} operation can be invoked on a single object.
The object satisfies the following five conditions:
\begin{description}\setlength{\parskip}{0pt} 
\item[Validity] 
    Any non-$\perp$ value returned by a \resolve{} operation is an argument of 
    some \propose{} operation;
\item[Agreement] 
    If two \resolve{} operations return non-$\perp$ values $v$ and $v'$, 
    then $v=v'$;  
\item[Termination] 
    Every operation performed by a non-faulty process eventually terminates;
\item[Nontriviality] 
    If more than one \propose{} operations are performed and no process fails 
    while performing these \propose{} operations, every \resolve{} operation 
    started after some time instance returns non-$\perp$ value.
\end{description}
The above specification of the safe agreement object is based on
\cite{attiya2006adapting,bouzid2016anonymity}.

We propose an anonymous wait-free implementation, presented in Fig.~\ref{alg:safe_agreement}, 
of the safe agreement object for an arbitrary value set $V$.
The implementation makes use of $n$-array of weak set objects $\weakset[0\ldots n-1]$.
Each process firstly sets its input value to a local variable $view$. 
The process repeats the following procedure for $i=0,\ldots,n-1$:
it adds $view$ to the $\weakset[i]$; 
if $\weakset[i]$ holds a set of cardinality more than 2 and $view$ is the minimum 
value of the set, it waits until it gets a non-empty set from the $\weakset[n-1]$; 
otherwise, it sets the minimum value of the set to $view$. 
When a process gets a non-empty set from the $\weakset[n-1]$, it returns the minimum 
value in the set.

Our implementation is a generalization of the anonymous consensus algorithm 
proposed by Attiya et al.~\cite{attiya2002computing}.
Bouzid and Corentin~\cite{bouzid2016anonymity} have proposed an anonymous 
implementation of the safe agreement object for the case of $V=\{0,1\}$, 
where their implementation is also based on~\cite{attiya2002computing}.
However, their implementation is not immediately extended to the case of 
the infinite value set.
\begin{figure}[h]
\hrule \vspace{1mm} {
\setcounter{linenumber}{0}
\begin{tabbing}
bbb\=bbb\=bbb\=bbb\=bbb\=bbb\=bbb\=bbb\=bbb\=bbb\=bbb\=bbb\=  \kill

Shared variable : \\
    \>\texttt{array of atomic weak set objects} : $\weakset[0\ldots n-1]$\\[2mm]

\textsc{Code for a process} \\[2mm] 
Local variable:\\
    \>\texttt{Value} $\mathit{view}$ init $\perp$\\
    \>\texttt{Integer} $i$ init 0\\
    \>\texttt{set of Values} $\mathit{Snap}$ init $\emptyset$\\[2mm]

\textbf{operation} \propose{v}:\\
\nnll\> $\mathit{view}=v$\label{line:input}\\
\nnll\> \textbf{for} $i=0,\ldots,n-1$ \textbf{do}\\
\nnll\>\>    $\weakset[i].\add{\mathit{view}}$\\
\nnll\label{line:get}
    \>\>    $\mathit{Snap}=\weakset[i].\get{}$\\
\nnll\label{line:if}
\>\>   \textbf{if} $\#\mathit{Snap}\ge2$ \&\& $\mathit{view}==\min(\mathit{Snap})$ \textbf{then}\\
\nnll\>\>\>     \textbf{return} \textit{ACK}\\
\nnll\>\>   \textbf{else}\\
\nnll\label{line:update}
    \>\>\>     $\mathit{view}=\min(\mathit{Snap})$\\
\nnll\>     \textbf{return} \textit{ACK}\\[2mm]

\textbf{operation} \resolve{}:\\
\nnll\>   $\mathit{Snap}=\weakset[n-1].\get{}$\\
\nnll\> \textbf{if} $\mathit{Snap}\neq \emptyset$ \textbf{then}\\
\nnll\>\>   \textbf{return} $\min(\mathit{Snap})$\\
\nnll\> \textbf{else}\\
\nnll\>\>   \textbf{return} $\perp$\\
\end{tabbing}
\vspace{-6mm}
\hrule}
\caption{Anonymous implementation of safe agreement object}
\label{alg:safe_agreement}
\end{figure}

We now prove the correctness of the algorithm of Fig.~\ref{alg:safe_agreement}.
Recall that, although we refer to the processes by unique names $p_0$, $\ldots$, $p_{n-1}$, 
processes themselves have no means to access these names.
These set of names is denoted by $\Pi=\{p_0,\ldots,p_{n-1}\}$.

\begin{lemma}\label{thm:inclusion}
    Fix an execution of the algorithm of Fig.~\ref{alg:safe_agreement}.
    Let $V_i$ be the set of all the values that are added to $\weakset[i]$
    in the execution. 
    Then, $V_{i}\supseteq V_{i+1}$ holds for all $i=0,\ldots,n-2$.
\end{lemma}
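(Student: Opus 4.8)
The plan is to prove the containment directly by showing that every value that ever gets added to $\weakset[i+1]$ is already a member of $V_i$, and then tracing through the control flow of \propose{} to see where such a value comes from. The key observation is that a value lands in $\weakset[i+1]$ only at the very start of a loop iteration $i+1$ (the line $\weakset[i+1].\add{\mathit{view}}$), and that the $\mathit{view}$ written there was determined at the end of iteration $i$.

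First I would fix an arbitrary $w \in V_{i+1}$ and let $p$ be a process that added $w$ to $\weakset[i+1]$; this happens as $p$ executes the statement $\weakset[i+1].\add{\mathit{view}}$ with $\mathit{view}=w$. Next I would argue that, for $p$ to have reached loop iteration $i+1$ at all, it must have taken the \textbf{else} branch in iteration $i$ (line~\ref{line:update}), since taking the \textbf{if} branch (line~\ref{line:if}) would have caused $p$ to \textbf{return} and stop. Consequently the value $w$ equals the value assigned on line~\ref{line:update} during $p$'s iteration $i$, namely $w=\min(\mathit{Snap})$ where $\mathit{Snap}=\weakset[i].\get{}$ was the result of the get on line~\ref{line:get} in that same iteration. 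Now I invoke the specification of the atomic weak set: a \get{} on $\weakset[i]$ returns exactly the set of values that appeared as arguments of preceding \add{} operations on $\weakset[i]$, so $\mathit{Snap}\subseteq V_i$. Since $\min(\mathit{Snap})\in\mathit{Snap}$, we conclude $w\in V_i$. As $w$ was arbitrary, $V_{i+1}\subseteq V_i$.

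I do not expect a serious obstacle here; the argument is essentially a one-step unfolding of the loop body combined with the weak-set \get{} semantics. The two points requiring care are: (i) the control-flow claim that reaching iteration $i+1$ forces the \textbf{else} branch in iteration $i$, so that the view written into $\weakset[i+1]$ is genuinely $\min(\mathit{Snap})$ rather than some other value; and (ii) the appeal to the atomic weak-set property $\mathit{Snap}\subseteq V_i$, which is where the correctness of the implementation in Section~\ref{sec:aws} is used (the get never returns a value that was not added). I would also note in passing that $\min$ is well defined because $V$ is assumed totally ordered, and that the case $i=0$ (where the view entering iteration $0$ is the raw input $v$ set on line~\ref{line:input}) is irrelevant to the claim, whose range is $i=0,\ldots,n-2$ concerning values added to $\weakset[1],\ldots,\weakset[n-1]$. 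Iterating the inclusion over all $i$ yields the full chain $V_0\supseteq V_1\supseteq\cdots\supseteq V_{n-1}$.
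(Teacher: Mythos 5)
Your proof is correct and takes essentially the same route as the paper: the paper's own proof is just the one-line observation that every value added to $\weakset[i+1]$ is a value held in $\weakset[i]$, by Lines~\ref{line:get} and~\ref{line:update}, which is exactly your argument that the view written into $\weakset[i+1]$ equals $\min(\mathit{Snap})$ for a $\mathit{Snap}$ returned by a \get{} on $\weakset[i]$, hence lies in $V_i$. Your write-up simply makes explicit the two details the paper leaves implicit, namely the control-flow fact that reaching iteration $i+1$ forces the \textbf{else} branch in iteration $i$, and the weak-set semantics giving $\mathit{Snap}\subseteq V_i$.
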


\begin{proof}
    Every value added to $\weakset[i+1]$ is a value held in $\weakset[i]$
    by Lines~\ref{line:get} and \ref{line:update} of the algorithm.
    Thus, the lemma follows.
\end{proof}

\begin{lemma}[Validity]\label{thm:validity}
    The algorithm of Fig.~\ref{alg:safe_agreement} satisfies the validity condition.
\end{lemma}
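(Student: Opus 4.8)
The plan is to show that any non-$\perp$ value returned by a \resolve{} operation was an argument of some \propose{} operation. A \resolve{} returns a non-$\perp$ value only when it reads a non-empty set $\mathit{Snap}$ from $\weakset[n-1]$ and returns $\min(\mathit{Snap})$. Thus it suffices to prove that every value ever stored in any $\weakset[i]$ is a proposed value, and in particular every value in $\weakset[n-1]$ is; the returned minimum is then one such value.

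First I would establish the key invariant: \emph{every value added to any $\weakset[i]$ equals the input $v$ of some \propose{v} operation.} The proof is by induction on $i$, matching the structure already used in Lemma~\ref{thm:inclusion}. For the base case, the only values added to $\weakset[0]$ are the initial values assigned to the local variable $\mathit{view}$ at Line~\ref{line:input}, namely the argument $v$ of the invoking \propose{} operation. For the inductive step, consider a value $w$ added to $\weakset[i+1]$ by some process. By the code, before adding $\mathit{view}$ to $\weakset[i+1]$ the process last updated $\mathit{view}$ either at Line~\ref{line:input} (so $\mathit{view}$ is its own input) or at Line~\ref{line:update}, where it set $\mathit{view}=\min(\mathit{Snap})$ for a snapshot $\mathit{Snap}$ obtained from $\weakset[i]$ at Line~\ref{line:get}. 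In the latter case $\min(\mathit{Snap})$ is an element of $\weakset[i]$, hence by the induction hypothesis a proposed value. Either way $w$ is a proposed value, completing the induction.

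With this invariant in hand, the conclusion is immediate. Any non-$\perp$ value returned by \resolve{} is $\min(\mathit{Snap})$ for a non-empty $\mathit{Snap}=\weakset[n-1].\get{}$. Since \get{} returns a subset of the values ever added to $\weakset[n-1]$ (by the specification of the weak set object and the correctness established in Section~\ref{sec:aws}), every element of $\mathit{Snap}$, and in particular its minimum, is a value that was added to $\weakset[n-1]$. By the invariant applied with $i=n-1$, this value was the argument of some \propose{} operation, which is exactly the validity condition.

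I expect no serious obstacle here: validity is the most structural of the safe-agreement properties, and the induction is essentially a restatement of Lemma~\ref{thm:inclusion}'s observation that values only propagate \emph{downward} through the array by being minima of earlier sets. The one point requiring a little care is invoking the correctness of the weak set implementation to guarantee that \get{} returns only genuinely-added values rather than spurious ones; this is justified by the safety proof (Lemma~\ref{lemma:safe}) of Section~\ref{sec:aws}, which shows the implementation linearizes to the weak set specification, so no \get{} can ever return a value that was never the argument of an \add{}.
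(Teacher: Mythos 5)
Your proof is correct and follows essentially the same route as the paper: the paper invokes Lemma~\ref{thm:inclusion} to get $V_0\supseteq\cdots\supseteq V_{n-1}$ and observes that $V_0$ consists only of proposed values (Line~\ref{line:input}), whereas your induction on $i$ is just an inlined restatement of that same containment argument, with the same concluding step that a \resolve{} can only return a value held in $\weakset[n-1]$. The extra care you take about the weak set's \get{} returning only genuinely added values is implicit in the paper but harmless to make explicit.
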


\begin{proof}
    Fix an execution and define $V_i$ in the same way as Lemma~\ref{thm:inclusion}.
    By Line~1, every value in $V_0$ is an argument of some \propose{} operation.
    By Lemma~\ref{thm:inclusion}, $V_0\supseteq\cdots\supseteq V_{n-1}$
    and thus every value in $V_{n-1}$ is also an argument of some \propose{} operation.
    This completes the proof because every non-$\perp$ value returned by a \resolve{} 
    operation is a value held in the object $\weakset[n-1]$. 
\end{proof}

\begin{lemma}\label{thm:decrease}
    Fix an execution of the algorithm of Fig~\ref{alg:safe_agreement},
    in which more than one \propose{} operations are performed.
    Let $V_i$ be the set of the all values that are added to $\weakset[i]$
    in the execution. Let 
    \[
    \Pi_i=\{p\in\Pi\mid\text{$p$ performs \propose{} and adds $v\in V_i\setminus\{\min V_i\}$ to $\weakset[i]$}\}.
    \]
    Then, $\Pi_{i}\supseteq\Pi_{i+1}$ holds for all $i=0,\ldots,n-2$.
\end{lemma}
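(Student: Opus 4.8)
The plan is to prove the containment directly: fix $i\in\{0,\ldots,n-2\}$, take an arbitrary process $p\in\Pi_{i+1}$, and show $p\in\Pi_i$. The one external ingredient I would lean on is Lemma~\ref{thm:inclusion}, which gives $V_i\supseteq V_{i+1}$ and hence the crucial numerical fact $\min V_i\le\min V_{i+1}$; this is what lets me transfer a ``non-minimality'' statement at level $i+1$ back down to level $i$.

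First I would pin down, for $p\in\Pi_{i+1}$, exactly which value it contributes where. Since $p\in\Pi_{i+1}$, process $p$ reaches iteration $i+1$ of the for loop and there adds its current $\mathit{view}$---call it $u$---to $\weakset[i+1]$, with $u\in V_{i+1}\setminus\{\min V_{i+1}\}$, so $u>\min V_{i+1}$. The key control-flow observation is that to reach iteration $i+1$ at all, $p$ cannot have returned \textit{ACK} at iteration $i$ (Line~\ref{line:if}); hence it executed the \textbf{else} branch (Line~\ref{line:update}), so $u$ is precisely $\min(\mathit{Snap})$ for the snapshot $\mathit{Snap}$ returned by the \get{} at iteration $i$ (Line~\ref{line:get}).

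Next I would relate $u$ to the value $p$ adds to $\weakset[i]$. Let $w$ be $p$'s $\mathit{view}$ at the start of iteration $i$, the value it adds to $\weakset[i]$; since the \get{} at iteration $i$ occurs after this add, we have $w\in\mathit{Snap}$ and therefore $u=\min(\mathit{Snap})\le w$. Chaining the inequalities gives $w\ge u>\min V_{i+1}\ge\min V_i$, so $w>\min V_i$. As $w\in V_i$ by construction, $w\in V_i\setminus\{\min V_i\}$ is a value $p$ adds to $\weakset[i]$, which is exactly the condition for $p\in\Pi_i$. Since $p$ was arbitrary, $\Pi_{i+1}\subseteq\Pi_i$.

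I expect the only delicate points to be the control-flow bookkeeping---arguing that membership in $\Pi_{i+1}$ forces $p$ through the \textbf{else} branch at iteration $i$, so that the value added one level up is the minimum of a snapshot that already contains $w$---and the careful propagation of strict versus non-strict inequalities, where the strictness $u>\min V_{i+1}$ (coming from $u\neq\min V_{i+1}$) must survive down to $w>\min V_i$. Everything else is a direct substitution once $\min V_i\le\min V_{i+1}$ is in hand.
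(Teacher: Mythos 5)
Your proof is correct and is in substance the same as the paper's: both rest on Lemma~\ref{thm:inclusion} (hence $\min V_i\le\min V_{i+1}$) together with the control-flow fact that a process reaching iteration $i+1$ must have taken the \textbf{else} branch at iteration $i$, so the value it adds to $\weakset[i+1]$ is $\min(\mathit{Snap})$ for a snapshot that contains its own earlier addition to $\weakset[i]$. The only difference is one of direction: the paper proves the contrapositive ($p\notin\Pi_i\Rightarrow p\notin\Pi_{i+1}$, with a case split on whether $p$ executes iteration $i$ and what happens if it returns early or fails), whereas you argue the inclusion $\Pi_{i+1}\subseteq\Pi_i$ directly via the inequality chain $w\ge u>\min V_{i+1}\ge\min V_i$, which even spares you that case split.
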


\begin{proof}
    If $p\not\in\Pi_i$, there are two cases: 
    the process $p$ does not execute the $i$-th iteration of the for loop;
    the process $p$ executes the $i$-th iteration and adds $\min V_i$ to $\weakset[i]$.
    In the former case, the process does not execute the $(i+1)$-th iteration
    and thus $p\not\in\Pi_{i+1}$.
    In the latter case, the process adds $\min V_{i}$ to $\weakset[i+1]$ 
    (if it does not fail)
    and thus $p\not\in \Pi_{i+1}$ because $\min V_i \le \min V_{i+1}$
    by Lemma~\ref{thm:inclusion}.
    In either case, $p\not\in\Pi_i$ implies $p\not\in\Pi_{i+1}$.
    We obtain $\Pi_{i}\supseteq\Pi_{i+1}$ by taking contrapositive.
\end{proof}

\begin{lemma}[Agreement]\label{thm:agreement}
    The algorithm of Fig.~\ref{alg:safe_agreement} satisfies the agreement condition.
\end{lemma}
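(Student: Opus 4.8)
The plan is to reduce the agreement condition to the single structural fact that the last object in the array, $\weakset[n-1]$, ever holds at most one value. Let $V_i$ denote, as in Lemma~\ref{thm:inclusion}, the set of all values added to $\weakset[i]$, and write $m_i=\min V_i$. A \resolve{} returns a non-$\perp$ value only when its \get{} on $\weakset[n-1]$ returns a non-empty set $\mathit{Snap}$, in which case it returns $\min(\mathit{Snap})$; since the weak set is atomic, such a $\mathit{Snap}$ satisfies $\mathit{Snap}\subseteq V_{n-1}$ and, being non-empty, contains the value added first. Thus if I can show $\#V_{n-1}=1$, then every non-empty $\mathit{Snap}$ equals $\{m_{n-1}\}$ and every non-$\perp$ \resolve{} returns $m_{n-1}$, giving agreement at once. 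So the whole argument is devoted to proving $\#V_{n-1}=1$, equivalently $\Pi_{n-1}=\emptyset$ (note $\Pi_i=\emptyset$ precisely when only $m_i$ is ever added to $\weakset[i]$, since any surviving value above $m_i$ is added by a process lying in $\Pi_i$).

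The heart of the proof is to strengthen the monotonicity $\Pi_i\supseteq\Pi_{i+1}$ of Lemma~\ref{thm:decrease} to a \emph{strict} decrease: whenever $\Pi_i\neq\emptyset$ (equivalently $\#V_i\ge2$) I claim $\#\Pi_{i+1}\le\#\Pi_i-1$. To exhibit a process in $\Pi_i\setminus\Pi_{i+1}$ I would single out the process $p$ whose \add{} to $\weakset[i]$ is linearized last among all processes in $\Pi_i$; by atomicity of the weak set, the \get{} that $p$ performs immediately afterwards (Line~\ref{line:get}) observes every value of $V_i$ that is larger than $m_i$. I then case on what $p$ sees. If $p$ sees $m_i$, then $\min(\mathit{Snap})=m_i$ is strictly below $p$'s own non-minimal value, so $p$ proceeds carrying $m_i$ into $\weakset[i+1]$ (or crashes first), forcing $m_{i+1}=m_i$ and hence $p\notin\Pi_{i+1}$. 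If $p$ does not see $m_i$, I split on whether $m_i$ is propagated to $\weakset[i+1]$ at all: if it is, then—because in this situation any process whose view equals $m_i$ has already observed $p$'s larger value and therefore returns at Line~\ref{line:if}—the propagation must be performed by some $\Pi_i$-process that adopts $m_i$ at Line~\ref{line:update}, and that process leaves $\Pi_{i+1}$; if $m_i$ is not propagated, then every value surviving to $\weakset[i+1]$ exceeds $m_i$, so $m_{i+1}$ equals the second-smallest value of $V_i$, which is exactly the value $p$ either adopts or has already committed to, so again $p\notin\Pi_{i+1}$. This case analysis, and in particular the bookkeeping of the linearization order of the \add{} operations on $\weakset[i]$ combined with weak-set atomicity, is the step I expect to be the main obstacle.

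Granting the strict decrease, the conclusion is a short counting argument. At least one process proposes the value $m_0$ and adds it to $\weakset[0]$ at Line~\ref{line:input}, so that process is not in $\Pi_0$ and hence $\#\Pi_0\le n-1$. Since each of the $n-1$ transitions from $\Pi_i$ to $\Pi_{i+1}$ removes at least one process as long as $\Pi_i$ is non-empty, we obtain $\#\Pi_{n-1}\le(n-1)-(n-1)=0$, i.e.\ $\Pi_{n-1}=\emptyset$. Therefore $V_{n-1}=\{m_{n-1}\}$, and by the reduction of the first paragraph every non-$\perp$ \resolve{} returns $m_{n-1}$, which is exactly the agreement condition.
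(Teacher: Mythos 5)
Your proof is correct and follows essentially the same route as the paper's: both reduce agreement to showing $\#V_{n-1}\le 1$, i.e.\ $\Pi_{n-1}=\emptyset$, by strengthening the monotonicity of Lemma~\ref{thm:decrease} to a strict inclusion $\Pi_i\supsetneq\Pi_{i+1}$ whenever $\#V_i\ge 2$, and then counting down from $\#\Pi_0\le n-1$. The only difference is how the witness in $\Pi_i\setminus\Pi_{i+1}$ is produced: the paper splits on whether \emph{some} process of $\Pi_i$ sees $\min V_i$ at Line~\ref{line:get} (and in the negative case argues that every process that added $\min V_i$ returns at Line~\ref{line:if}, so $\min V_i<\min V_{i+1}$ and the adder of $\min V_{i+1}$ is the witness), whereas you anchor the case analysis on the process whose \add{} is linearized last among $\Pi_i$ and sub-case on whether $\min V_i$ reaches $\weakset[i+1]$ --- the same two mechanisms, arranged differently.
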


\begin{proof}
    If no \propose{} operation are performed in an execution,
    it is clear to see that every \resolve{} operation returns $\perp$.
    Thus, the lemma trivially holds for this case.

    Fix an execution, in which more than one \propose{} operations are performed
    and define $V_i$ and $\Pi_i$ in the same way as Lemma~\ref{thm:decrease}. 
    Note that $\#\Pi_{0}\le n-1$ by definition.
    It is sufficient to prove that $\#V_{n-1}<2$ for any execution,
    because every non-$\perp$ value returned by a \resolve{} operation 
    is a value held in the object $\weakset[n-1]$. 
    We prove this by contradiction.
    
    Assume that $\#V_{n-1}\ge 2$ holds.
    This implies $\#V_i\ge2$ for all $i=0,\ldots,n-1$ by Lemma~\ref{thm:inclusion}.
    Let us define
    \[
        \Sigma_i=\{p\in\Pi\mid\text{$p$ performs \propose{} and adds $\min V_i$ to $\weakset[i]$}\}.
    \]
    Note that $\Pi_i\cap\Sigma_i=\emptyset$.

    We now prove that $\Pi_{i}\supsetneq\Pi_{i+1}$ for all $i=0,\ldots,n-2$.
    We consider the following two cases:
    
    \emph{Case 1}:
    Suppose that some process $p\in\Pi_i$ sees $\min V_i$ at Line~\ref{line:get} 
    in the $i$-th iteration.
    Then, the process $p$ assigns $\min V_i$ to its $\mathit{view}$ and adds $\min V_i$ 
    to $\weakset[i+1]$ in the $(i+1)$-th iteration.
    This leads $p\in\Sigma_{i+1}$ and thus $p\not\in\Pi_{i+1}$ because $\min V_i\le\min V_{i+1}$ by 
    Lemma~\ref{thm:inclusion}.
    
    \emph{Case 2}:
    Suppose that no process in $\Pi_i$ sees $\min V_i$ at Line~\ref{line:get}
    in the $i$-th iteration.
    Then, the processes in $\Sigma_i$ perform \add{} operations on $\weakset[i]$
    after the processes in $\Pi_i$ perform \get{} operations on $\weakset[i]$.
    This leads that all the processes in $\Sigma_i$ sees $V_i$
    in the $i$-th iteration and jump to the while loop
    and thus $\min V_i<\min V_{i+1}$ by Lemma~\ref{thm:inclusion}. 
    Thus, at least one process in $\Pi_{i}$ adds $\min V_{i+1}$ to $\weakset[i+1]$
    and is not in $\Pi_{i+1}$.
    
    In either case, $\Pi_{i}\supsetneq\Pi_{i+1}$ holds by Lemma~\ref{thm:decrease}.
    $\#\Pi_{0}\le n-1$ and $\Pi_{0}\supsetneq\cdots\supsetneq\Pi_{n-1}$
    imply $\Pi_{n-1}=\emptyset$.
    Thus, only $\min V_{n-1}$ is added to the object $WS_{n-1}$ and $\#V_{n-1}=1$.
    This leads contradiction.
\end{proof}

 It is clear to see that the algorithm satisfies the termination condition:
\begin{lemma}[Termination]\label{thm:termination}
    The algorithm of Fig.~\ref{alg:safe_agreement} satisfies the termination
    condition.
\end{lemma}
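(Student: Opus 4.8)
The plan is to reduce termination of the two safe-agreement operations to termination of the underlying weak-set operations, and then to invoke the liveness analysis of Section~\ref{sec:aws} in the special situation created by the safe-agreement algorithm, namely that only finitely many distinct values are ever in play.

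First I would observe that neither operation contains any unbounded loop or explicit waiting of its own. A \resolve{} performs a single \get{} on $\weakset[n-1]$ followed by a comparison, so it invokes exactly one weak-set operation. A \propose{} consists of the \textbf{for} loop indexed by $i=0,\ldots,n-1$, which runs for at most $n$ iterations, each performing one \add{} and one \get{}; the body only tests the local set $\mathit{Snap}$ and updates the local variable $\mathit{view}$, and the operation returns \textit{ACK} either inside the loop (Line~\ref{line:if}) or immediately after it. Hence a \propose{} invokes at most $2n$ weak-set operations and a \resolve{} exactly one, so it suffices to show that every weak-set \add{} and \get{} issued by a non-faulty process terminates.

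Second I would pin down the structural fact that makes this possible even though the weak-set implementation of Section~\ref{sec:aws} is only non-blocking: in any execution of the algorithm of Fig.~\ref{alg:safe_agreement} the set of values ever added to $\weakset[0],\ldots,\weakset[n-1]$ is finite. Indeed, a \resolve{} adds no value, and since the object is one-shot each process performs at most one \propose{}, so at most $n$ distinct arguments are ever supplied on Line~\ref{line:input}; every value subsequently stored is one of these, since a newly assigned $\mathit{view}$ is always a value read back through Line~\ref{line:get} (formally by an easy induction, as in Lemma~\ref{thm:inclusion}). Thus all $\mathit{view}$ variables, and hence the weak-set internal $\mathit{View}$ variables, range over a fixed finite value set.

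Third I would replay the liveness argument of Section~\ref{sec:aws} for a single operation rather than for the whole system. Suppose, for contradiction, that some non-faulty process $p$ never completes a weak-set operation; being non-faulty, $p$ then takes infinitely many steps inside that operation. Because the value set is finite, every $\mathit{View}$ is a subset of a fixed finite set and is non-decreasing for inclusion, so after some time every $\mathit{View}$ stabilises; the minimal-stable-view argument used there then shows that $p$, on its next \scan{}, reads a snapshot whose components all equal its own stable view (for a \get{}) or all contain the target value (for an \add{}), so the loop-termination condition on Line~\ref{li:while-get} or Line~\ref{li:while-add} holds and $p$ returns — a contradiction. Combining the three steps gives the claim. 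The only delicate point is this last step: the non-blocking analysis of Section~\ref{sec:aws} derived finiteness of the value set from finiteness of the number of operations, whereas here that finiteness is instead guaranteed directly by the one-shot, value-free nature of \resolve{}, and it is exactly this that upgrades the system-wide non-blocking guarantee to per-operation termination in the present setting.
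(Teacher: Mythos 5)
Your first paragraph is, by itself, essentially the paper's entire proof: at the level of abstraction of Fig.~\ref{alg:safe_agreement} the weak sets are \emph{atomic} objects whose operations are given as terminating primitives, \propose{} is a loop of at most $n$ iterations with only local tests in its body, and \resolve{} is loop-free, so termination is immediate---this is why the paper dismisses the lemma with ``it is clear to see.'' Your second and third paragraphs go further and ask whether termination survives when each $\weakset[i]$ is instantiated by the non-blocking implementation of Section~\ref{sec:aws}; that is a legitimate concern (the paper itself relies on that instantiation for its space bound), and your finiteness observation in the second paragraph is correct. The gap is in the third paragraph.

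The liveness proof of Section~\ref{sec:aws} uses the hypothesis of finitely many operations \emph{twice}, and you replace only one of the two uses. Besides bounding the value universe (your replacement), it yields a time $\tau_0$ after which \emph{only} the stuck processes $N$ take steps; that is what makes every register eventually hold a stable view of a process in $N$ (Observation~\ref{obs:st2}), and only then does minimality of $SView_0$ \emph{within the views of $N$} force $Snap[i]=SView_0$. In your setting this step collapses: since \resolve{} may be invoked unboundedly often, processes outside $N$ keep taking steps forever, and they can keep writing (after views stabilize, a \get{} whose initial \scan{} sees a register holding a strict subset of its stable view enters the write loop and performs updates). So registers may contain stable views of non-stuck processes that are strictly below the minimal $N$-view, or the minimal stable view overall may belong to a non-stuck process; either way, a stuck process's scans can keep returning views strictly below its own, no loop-exit condition on Line~\ref{li:while-add} or~\ref{li:while-get} is triggered, and your quoted conclusion (``all components equal its own stable view / all contain the target value'') does not follow. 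The statement you are after is still true, but repairing the argument needs a genuinely new ingredient, e.g.\ a well-founded induction over the poset of \emph{all} stable views: once every register holds a stable view (guaranteed because stuck processes overwrite every register infinitely often), a process with a minimal view either terminates (contradiction if stuck) or never writes again; inductively, once all writers with views strictly below $S$ have stopped writing and every register has since been overwritten, processes with view $S$ either stop writing or, if stuck, satisfy their exit condition. Since a stuck process writes forever, this forces $N=\emptyset$.
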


\begin{lemma}[Nontriviality]\label{thm:nontriviality}
    The algorithm of Fig.~\ref{alg:safe_agreement} satisfies the nontriviality condition.
\end{lemma}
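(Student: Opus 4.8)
The plan is to show that, under the nontriviality hypothesis, the object $\weakset[n-1]$ eventually becomes non-empty and, since a weak set is monotone (values are never removed), stays non-empty forever after. Once this is established, I take $T$ to be the time at which the first \add{} to $\weakset[n-1]$ completes; any \resolve{} invoked after $T$ reads a non-empty set from $\weakset[n-1]$ and hence returns $\min(\mathit{Snap})\neq\perp$, which is exactly what the nontriviality condition demands (``every \resolve{} started after some time instance returns a non-$\perp$ value'').

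So the crux is to prove that some process reaches the last iteration of the for loop and executes $\weakset[n-1].\add{}$. Reusing the notation of Lemma~\ref{thm:inclusion}, let $V_i$ be the set of all values added to $\weakset[i]$ during the execution; I would prove by induction on $i$ that $V_i\neq\emptyset$ for every $i=0,\ldots,n-1$. The base case is immediate: since at least one \propose{} operation is performed, some process executes the $0$-th iteration and adds its input to $\weakset[0]$, so $V_0\neq\emptyset$.

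The key step is the observation that the holder of the maximum value can never stop inside the loop. Assume $V_i\neq\emptyset$, put $M=\max V_i$, and consider any process $q$ that adds $M$ to $\weakset[i]$ in its $i$-th iteration. When $q$ then executes $\mathit{Snap}=\weakset[i].\get{}$ at line~\ref{line:get}, the value $M$ is already present, so $M\in\mathit{Snap}$; moreover every element of $\mathit{Snap}$ lies in $V_i$ and is therefore at most $M$. Hence $\min(\mathit{Snap})\le M$, with equality only if $\mathit{Snap}=\{M\}$, that is $\#\mathit{Snap}=1$. In either case the condition tested at line~\ref{line:if}, namely $\#\mathit{Snap}\ge2$ together with $\mathit{view}=\min(\mathit{Snap})$, is false for $q$: if $\#\mathit{Snap}\ge2$ then $\min(\mathit{Snap})<M=\mathit{view}$ so the second conjunct fails, and if $\#\mathit{Snap}=1$ the first conjunct fails. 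Thus $q$ takes the else branch and proceeds to iteration $i+1$. Since no process fails while performing a \propose{} operation, $q$ does not crash in the meantime and eventually executes $\weakset[i+1].\add{}$ at the next iteration, whence $V_{i+1}\neq\emptyset$. This completes the induction and yields $V_{n-1}\neq\emptyset$.

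The main obstacle is precisely isolating why blocking cannot stall \emph{every} process: the stopping test requires a process to be the strict minimum of a set of at least two values, which the current maximum can never be; combined with the absence of failures, this forces non-emptiness to be propagated level by level all the way to $\weakset[n-1]$. The remaining points are routine and I would only sketch them: the propagating chain reaches $\weakset[n-1]$ in finitely many steps because each weak-set operation of a correct process terminates, so the first \add{} to $\weakset[n-1]$ completes at a finite time $T$; monotonicity of the weak set then guarantees $\weakset[n-1]$ is non-empty from $T$ onward, giving the required time instance after which every \resolve{} returns a non-$\perp$ value. (The argument in fact uses only that at least one \propose{} is performed, which is implied by the hypothesis.)
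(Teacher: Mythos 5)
Your proof is correct and follows essentially the same route as the paper's: an induction on $i$ showing $V_i\neq\emptyset$ for $i=0,\ldots,n-1$, whose key step is that the process holding $\max V_i$ can never satisfy the exit test at Line~\ref{line:if}, since $\max V_i$ cannot be the minimum of a subset of $V_i$ of cardinality at least $2$. The only differences are cosmetic: you merge the paper's two cases ($\#V_i<2$ and $\#V_i\ge2$) into a single uniform argument about the max-holder, and you spell out the reduction from $V_{n-1}\neq\emptyset$ to the nontriviality condition (monotonicity of the weak set and the choice of the time instance $T$), which the paper leaves implicit.
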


\begin{proof}
    Fix an execution, in which more than one \propose{} operation are performed
    and no process fails while performing these \propose{} operations.
    Define $V_{i}$ in the same way as in Lemma~\ref{thm:inclusion}.
    It is enough to show that $V_i\neq\emptyset$ for all $i=0,\ldots,n-1$.
    We prove this by induction on $i$.

    \emph{Basis}: 
        Every process that performs a \propose{} operation adds its argument to 
        $\weakset[0]$ and thus $V_0\neq\emptyset$.

    \emph{Induction step}:
        Suppose that $V_i\neq\emptyset$ holds.
        If $\#V_i<2$, every process that executes the $i$-th iteration of the for loop
        proceeds to the $(i+1)$-th iteration because the condition ``$\#Snap\ge 2$'' 
        at Line~\ref{line:if} is never satisfied.
        If $\#V_i\ge2$, the process that adds $\max V_i$ to $\weakset[i]$ never
        satisfies the condition of the if statement and proceeds to the $(i+1)$-th iteration,
        because $\max V_i$ cannot be the minimum of any subset of $V_i$ of cardinality
        more than 1.
        In either case, $V_{i+1}\neq\emptyset$.
\end{proof}

By Lemmas~\ref{thm:validity}, \ref{thm:agreement}, \ref{thm:termination}, and 
\ref{thm:nontriviality}, we obtain the following theorem.

\begin{theorem}
    The algorithm of Fig.~\ref{alg:safe_agreement} is an anonymous wait-free
    implementation of safe agreement object.
\end{theorem}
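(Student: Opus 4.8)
The plan is to obtain the theorem as a direct assembly of the four correctness lemmas already established, together with an inspection argument for anonymity. First I would recall that, by its specification, an implementation qualifies as a safe agreement object exactly when it satisfies Validity, Agreement, Termination, and Nontriviality. These are established for the algorithm of Fig.~\ref{alg:safe_agreement} by Lemmas~\ref{thm:validity}, \ref{thm:agreement}, \ref{thm:termination}, and \ref{thm:nontriviality} respectively, so the correctness part of the claim is immediate once these are cited in sequence; no new combinatorial content is needed here, since the substantive work (in particular the counting argument $\Pi_0\supsetneq\cdots\supsetneq\Pi_{n-1}$ behind Agreement) already lives in those lemmas.

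Next I would justify the qualifier \emph{anonymous} by inspection of the code: every process runs the identical program shown in Fig.~\ref{alg:safe_agreement}, no line reads or branches on a process name, and the only shared state it touches is the array of weak set objects $\weakset[0\ldots n-1]$, each of which is the anonymous implementation from Section~\ref{sec:aws}. Since an anonymous client program layered over anonymous base objects is itself anonymous, the composition is anonymous.

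Finally, for \emph{wait-free} I would observe that the wait-free progress guarantee for this object is precisely the Termination condition — every operation of a non-faulty process terminates — which is Lemma~\ref{thm:termination}. Here I would make explicit that the control flow of \propose{v} is a single \textbf{for} loop of exactly $n$ iterations with no busy-waiting, and that \resolve{} performs a single \get{}, so termination of each safe agreement operation reduces to termination of the underlying weak set \add{} and \get{} calls.

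The point requiring care — and the only non-routine step — is this last reduction, because the weak set of Section~\ref{sec:aws} was shown to be only \emph{non-blocking}, not wait-free, and a non-blocking object in principle permits an individual operation to be starved forever. I would therefore scrutinize Lemma~\ref{thm:termination}, which is asserted as ``clear to see,'' to confirm that each individual \add{}, \get{}, and hence each \propose{v} and \resolve{}, actually terminates rather than merely guaranteeing system-wide progress. The natural justification is that the weak set liveness argument of Section~\ref{sec:aws} forces every issued operation to terminate once the set of views stabilizes, and that the bounded, wait-free-structured caller \propose{v} feeds the weak set only such well-behaved executions; pinning down this interaction (especially for arbitrarily many concurrent \resolve{} calls) is the one genuinely new obligation, whereas the remainder of the theorem is pure bookkeeping over the already-proved lemmas.
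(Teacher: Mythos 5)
Your handling of the theorem itself coincides with the paper's: the paper derives it in a single sentence by citing Lemmas~\ref{thm:validity}, \ref{thm:agreement}, \ref{thm:termination}, and \ref{thm:nontriviality}, so your assembly of the four conditions, plus the anonymity-by-inspection remark (left implicit in the paper), is a faithful match.

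The substance of your proposal is the termination discussion, and your instinct there is correct --- you have put your finger on the one genuinely thin spot. Two readings must be separated. As literally stated, the theorem takes the shared variables of Fig.~\ref{alg:safe_agreement} to be \emph{atomic weak set objects}, i.e., base objects whose operations return; under that reading \propose{v} is a bounded loop with no waiting, \resolve{} is a single \get{}, and Lemma~\ref{thm:termination} really is immediate, which is the reading under which the paper's ``clear to see'' is defensible. Your worry becomes real under the second reading, which the paper needs for the space-complexity remark following the theorem and for Theorem~\ref{thm:equivalence}: there the weak sets are instantiated by the non-blocking implementation of Section~\ref{sec:aws}, and wait-freedom of the composition is not automatic. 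However, your sketched repair does not close this gap: the liveness proof of Section~\ref{sec:aws} is a contradiction argument that assumes \emph{finitely many} operations are invoked in the entire execution, and that hypothesis is exactly what fails in your problem case of unboundedly many \resolve{} calls (there is then no time after which all operations that will ever terminate have terminated). What actually rescues the construction is the following. First, \propose{} is one-shot, so only finitely many \add{} operations --- hence finitely many values --- ever occur, and every process's view still stabilizes even if infinitely many \get{} operations are issued. Second, the minimal-stable-view argument must be redone distinguishing processes stuck forever in one operation from processes completing infinitely many \get{}s. If some minimal stable view belongs to a stuck process, the paper's argument applies verbatim and that process terminates, a contradiction. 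Otherwise, every minimal stable view belongs to a process completing infinitely many \get{}s; such a process can only ever see its own view in the registers, hence after stabilization its \get{}s return at their first scan and it never writes again; but then no register is ever again written with a minimal view, while a stuck process (which exists by the contradiction hypothesis) keeps writing its different, non-minimal view to every register round-robin, so the fast process's later scans cannot find its own view in all registers --- contradicting what stability and minimality force those scans to see. This extension is the content actually missing behind Lemma~\ref{thm:termination}; since neither the paper nor your proposal supplies it, you were right to single it out, but it needs to be discharged rather than only noted.
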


Note that the space complexity of the implementation of the safe agreement object
is $n$ atomic registers, because an arbitrary finite number of atomic weak set objects
are simulated on top of a single atomic weak set object.

\section{\boldmath $t$-Resilient Solvable Colorless Tasks}
\label{sec:charact}

We now give a characterization of the $t$-resilient solvability of colorless tasks
in the anonymous model. 

\begin{theorem}\label{thm:equivalence}
    A colorless task is $t$-resilient solvable by $n$ anonymous processes
    with atomic weak set objects if and only if it is $t$-resilient solvable 
    by $n$ non-anonymous processes with atomic snapshot objects.
    Moreover, if a colorless task is $t$-resilient solvable by
    $n$ anonymous processes, it is solvable with $n$ atomic registers.
\end{theorem}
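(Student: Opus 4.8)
The plan is to prove Theorem~\ref{thm:equivalence} by establishing the two directions of the equivalence separately, and then extracting the register-count refinement from the construction used in the forward direction. The ``only if'' direction is essentially free: any anonymous $t$-resilient algorithm using atomic weak set objects can be run verbatim by non-anonymous processes, since ignoring identifiers and replacing atomic snapshot objects by the weak set objects they are built from only removes information. So the whole content lies in the ``if'' direction, namely that $t$-resilient non-anonymous solvability implies $t$-resilient anonymous solvability, and the figure in the introduction signals that I should do this by an anonymous simulation of a non-anonymous $t$-resilient system.

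The main step I would carry out is an \emph{anonymous BG-simulation}. The classical BG-simulation~\cite{BorowskyGLR01} lets $t+1$ processes simulate a system of $n$ processes running a $t$-resilient protocol, the key ingredient being one safe agreement object per simulated step to ensure that at most one simulated process gets ``blocked'' per crash of a simulator. Since Section~\ref{sec:safeAg} already gives an anonymous wait-free safe agreement object built from atomic weak set objects (hence from $n$ MWMR registers), the plan is to have the $n$ anonymous processes collectively simulate the executions of the non-anonymous $t$-resilient protocol solving the colorless task. Because the task is \emph{colorless}, the simulated processes need not be tied to fixed identities: a simulator need only agree, through safe agreement, on the next value written by some simulated thread, and any process may adopt any decided output value. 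Thus anonymity is not an obstacle to who-simulates-what, exactly the property that makes colorless tasks amenable to simulation. I would show that a surviving anonymous process always extracts enough decided outputs to complete the simulated execution, and that with at most $t$ real crashes at most $t$ simulated threads block, so the $t$-resilient guarantee of the simulated protocol is preserved.

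The register-count refinement follows by accounting: the safe agreement objects are the only shared structures the simulation needs beyond a bounded amount of simulated state, and by the remark after the safe agreement theorem, an arbitrary finite number of safe agreement objects (built on weak set objects) can be supported on top of a \emph{single} atomic weak set object, which by Section~\ref{sec:aws} is implemented non-blockingly from $n$ MWMR registers. Assembling the non-blocking weak set implementation, the wait-free safe agreement on top of it, and the simulation on top of that yields an anonymous solution using only $n$ atomic registers, giving the ``solvable with $n$ atomic registers'' clause. As the figure indicates, an alternative route avoids BG-simulation entirely: use the anonymous safe agreement object to implement $k$-set agreement anonymously, and then invoke the existing topological characterization of $t$-resilient colorless solvability~\cite{HerlihyR12,yanagisawa2016wait-free}, whose sufficient condition is expressed in terms of $k$-set agreement power; I would mention this as a second, topology-based derivation.

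The hard part will be making the anonymous BG-simulation rigorous: in the standard construction each simulator keeps private track of which simulated threads it is advancing and uses identifiers to coordinate round-robin progress, so I must replace all such identifier-based bookkeeping with purely local state and with agreement mediated through the anonymous safe agreement and weak set objects. In particular I must argue carefully that the non-blocking (rather than wait-free) nature of the weak set object does not break the liveness of the simulation under the $t$-resilient adversary, and that the ``at most $t$ blocked simulated threads'' invariant survives when simulators are indistinguishable. Verifying this liveness-under-anonymity invariant is where I expect the real difficulty to lie; the safety and the register accounting should then follow from the lemmas already established in Sections~\ref{sec:aws} and~\ref{sec:safeAg}.
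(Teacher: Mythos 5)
Your proposal is correct and takes essentially the same approach as the paper: the paper likewise dispatches the ``only if'' direction as immediate, proves the ``if'' direction both by an anonymous BG-simulation built from the safe agreement objects of Section~\ref{sec:safeAg} and by the topological route (anonymous $(t+1)$-set agreement plus $b$-iterated barycentric agreement combined with the Herlihy--Rajsbaum characterization), and obtains the $n$-register bound by exactly your accounting argument that a single atomic weak set object, implemented from $n$ MWMR registers, supports all the derived objects. The only difference is emphasis: the paper develops the topological route in full detail and leaves the simulation's correctness as ``similar to Theorem 5 of~\cite{BorowskyGLR01}'', which is the same level of rigor you anticipate for the part you flag as hard.
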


The only if part of the theorem is immediate because every anonymous protocol 
can be executed by non-anonymous processes.
We prove the if part by two different approaches, topological one and operational one.

\subsection{Topological Approach}

We prove the if part of Theorem~\ref{thm:equivalence} by a topological argument.

We first show that the $(t+1)$-set agreement is $t$-resilient solvable
by $n$ anonymous processes.
An algorithm of Fig.~\ref{alg:set_agreement} presents an anonymous $t$-resilient protocol
for the $(t+1)$-set agreement.
In the protocol, each process first proposes its input value to \safeagree$[i]$
for $i=0,\ldots,n-1$.
Then, the process repeatedly performs a \resolve{} operation to all \safeagree$[i]$
in the round-robin manner until it gets non-$\perp$ value.
Once the process gets non-$\perp$ value, the process returns the value.

\begin{figure}[h]
\hrule \vspace{1mm} {
\setcounter{linenumber}{0}
\begin{tabbing}
bbb\=bbb\=bbb\=bbb\=bbb\=bbb\=bbb\=bbb\=bbb\=bbb\=bbb\=bbb\=  \kill

Shared variable : \\
\>\texttt{array of safe agreement objects} : $\safeagree[0\ldots t]$ \\[2mm]

\textsc{Code for a process} \\[2mm] 
Local variable:\\
    \>\texttt{Integer} $i$ init 0\\
    \>\texttt{Value} $result$ init $\perp$\\[2mm]

\setagree{v}:\\
\nnll\> \textbf{for} $i=0,\ldots,t$ \textbf{do}\\
\nnll\>\>   $\safeagree[i].\propose{v}$\\
\nnll\>     $i=0$\\
\nnll\label{line:while}\> \textbf{while} $result=\perp$ \textbf{do}\\
\nnll\>\>   $result=\safeagree[i].\resolve{}$\\
\nnll\label{line:ewhile}\>\>   $i=i+1\mod{t+1}$\\
\nnll\> \textbf{return} $result$\\
\end{tabbing}
\vspace{-6mm}
\hrule}
\caption{Anonymous $t$-resilient $(t+1)$-set agreement protocol}
\label{alg:set_agreement}
\end{figure}

\begin{theorem}
    The algorithm of Fig.~\ref{alg:set_agreement} is a $t$-resilient anonymous
    protocol for the $(t+1)$-set agreement.
\end{theorem}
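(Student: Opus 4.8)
The plan is to verify the three properties required for a $(t+1)$-set agreement protocol—validity, agreement (i.e., at most $t+1$ distinct output values), and $t$-resilient termination—by leaning on the properties of the underlying safe agreement objects established earlier. First I would dispatch \emph{validity}: each value returned is the non-$\perp$ result of some $\safeagree[i].\resolve{}$, which by Lemma~\ref{thm:validity} (Validity of the safe agreement object) must be an argument of some \propose{} on that object; since every \propose{} argument is an input value passed into \setagree{}, the output is a valid input. Next I would handle \emph{agreement}, which is the step I expect to carry the real content: I must show that at most $t+1$ distinct values are decided. The key observation is that there are only $t+1$ safe agreement objects $\safeagree[0],\ldots,\safeagree[t]$, and by the Agreement property (Lemma~\ref{thm:agreement}) each object can return at most one non-$\perp$ value across all its \resolve{} operations. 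Since every decided value is the output of a \resolve{} on one of these $t+1$ objects, the total number of distinct decided values is at most $t+1$.

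For \emph{termination} under the $t$-resilient adversary, I would argue as follows. A process loops through $\safeagree[0],\ldots,\safeagree[t]$ repeatedly, calling \resolve{} until it obtains a non-$\perp$ value. By the Termination property of the safe agreement object, each individual \resolve{} call completes, so a process never blocks inside a single \resolve{}; the only way it fails to return from \setagree{} is if every \resolve{} it ever performs returns $\perp$ forever. To rule this out I would use the Nontriviality property (Lemma~\ref{thm:nontriviality}): among the $t+1$ objects, I claim at least one eventually has all its \propose{} operations completed without a fault occurring \emph{during} those proposals. Because at most $t$ processes may crash, and each process proposes to every $\safeagree[i]$ in the initial \textbf{for} loop, there are at most $t$ objects on which some \propose{} is left pending by a crashed process; hence at least one of the $t+1$ objects has all of its (more than one, if several processes participate) \propose{} operations performed by processes that do not fail during the proposal. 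Nontriviality then guarantees that every \resolve{} on that object started sufficiently late returns a non-$\perp$ value, so every correct process cycling through the objects eventually reads a non-$\perp$ result and terminates.

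The main obstacle will be making the termination argument fully rigorous, specifically pinning down the counting that identifies a ``good'' safe agreement object. The subtlety is that Nontriviality is stated for the regime where \emph{more than one} \propose{} is performed and no process fails \emph{while performing} those \propose{} operations; I would need to treat the degenerate case where only one process participates (then that single process's own \resolve{} eventually succeeds by the object's internal progress, or I argue directly that a solo process reads back its own proposed value), and I must confirm that a process crashing \emph{after} completing its \propose{} on a given object does not spoil Nontriviality for that object. The cleanest formulation is: at most $t$ objects can be ``contaminated'' by a proposal in progress at a process that subsequently crashes, leaving at least one uncontaminated object among the $t+1$, to which Nontriviality applies. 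I would state this counting as the crux of the proof and verify that the round-robin scan in the \textbf{while} loop guarantees every correct process eventually queries that uncontaminated object after its decided value has stabilized.
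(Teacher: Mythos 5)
Your proposal is correct and follows essentially the same route as the paper: validity and agreement are dispatched exactly as in the paper's proof (validity of the safe agreement object, plus the observation that $t+1$ objects each yield at most one non-$\perp$ value), and your termination argument — since proposals are performed sequentially, each crashed process can leave a pending \propose{} on at most one object, so at least one of the $t+1$ objects is uncontaminated and Nontriviality applies to it — is precisely the paper's counting argument. The degenerate case you flag (Nontriviality's ``more than one \propose{}'' hypothesis when only a single process participates) is a genuine looseness in the stated specification, but the paper's own proof does not address it either, so your treatment is, if anything, slightly more careful than the published one.
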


\begin{proof}
    \emph{Termination:} 
    In the protocol, each process performs \propose{} operations
    to $\safeagree[0],\ldots,\safeagree[n-1]$ sequentially. 
    Thus, even if $t$ processes fail,
    there is at least one safe agreement object such that 
    no process fails while performing a \propose{} operation on the object.
    By the nontriviality property of safe agreement objects, 
    after some time instance, \resolve{} operations on some safe agreement object
    return non-$\perp$ value and thus the while loop of 
    Line~\ref{line:while}--\ref{line:ewhile}
    eventually terminates. 
    
    \emph{Validity:} 
        Every argument of a \propose{} operation is a proposed  value.
        Because of the validity property of safe agreement objects,
        a non-$\perp$ value returned by some \resolve{} operation is
        one of the arguments of \propose{} operations.
        Thus, the validity condition holds.
        
    \emph{$k$-Agreement:}
         There are $t+1$ distinct safe agreement objects.
         Thus, by the agreement property of safe agreement objects,
         at most $t+1$ distinct values are decided.
\end{proof}

As the $b$-iterated barycentric agreement task is wait-free solvable 
by anonymous processes~\cite{yanagisawa2017wait}, the following theorem holds.

\begin{lemma}\label{thm:top_if}
    Let $T=(\cI,\cO,\Delta)$ be a colorless task.
    If there exists a continuous map $f:|\skel^{t}\cI|\to|\cO|$ carried by $\Delta$,
    $T$ is $t$-resilient solvable by $n$ anonymous processes.
\end{lemma}

\begin{proof}
    By Lemma~\ref{thm:approximation}, there is an integer $b$ and a simplicial map 
    $\delta:\bary^b\skel^t\cI\to\cO$ that satisfies
    $\delta(\bary^b \sigma)\subseteq\Delta(\sigma)$ for every $\sigma\in \skel^t \cI$.

    The following anonymous protocol solves the colorless task.
    Suppose that the set of all inputs to the processes is $s\in \cI$.
    Execute first  the anonymous $(t+1)$-set agreement protocol,
    and then the $b$-iterated barycentric agreement protocol (for sufficiently large value of $b$). 
    Each process chooses a vertex of $\bary^{b}\skel^{t}\sigma$.
    Finally, each process determines its output by applying $\delta$ to the vertex it chose.
\end{proof}

The if part of Theorem~\ref{thm:equivalence} follows from Lemma~\ref{thm:top_if}
and the following theorem by Herlihy and Rajsbaum:

\begin{theorem}[{\cite[Theorem 4.3]{herlihy2010topology}}]\label{thm:top_char_epo}
    A colorless task $T=(I,O,\Delta)$ is $t$-resilient solvable 
    by $n$ non-anonymous processes if and only if there exists a continuous map
    $f:|\skel^{t}I|\to|O|$ carried by $\Delta$.
\end{theorem}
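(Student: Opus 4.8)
The plan is to prove the two implications separately, treating this as the colorless, resilience‑$t$ analogue of the asynchronous computability theorem, whose two canonical ingredients are $(t+1)$-set agreement (the strongest agreement a $t$-resilient system has on a colorless input) and iterated barycentric agreement (wait-free, hence $t$-resilient, solvable). The sufficiency direction is a direct algorithm built from these two primitives, while the necessity direction rests entirely on the connectivity of the $t$-resilient colorless protocol complex.

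For sufficiency ($\Leftarrow$) I would argue exactly as in Lemma~\ref{thm:top_if}, but with the classical non-anonymous versions of the primitives. Assume a continuous map $f:|\skel^t\cI|\to|\cO|$ carried by $\Delta$. By the simplicial approximation lemma (Lemma~\ref{thm:approximation}) there is an integer $b$ and a simplicial map $\delta:\bary^b\skel^t\cI\to\cO$ with $\delta(\bary^b\sigma)\subseteq\Delta(\sigma)$ for every $\sigma\in\skel^t\cI$. The protocol first runs a $t$-resilient $(t+1)$-set agreement so that the set of surviving values is a simplex $\sigma$ of $\skel^t\cI$ (at most $t+1$ distinct values, dimension $\le t$); then it runs $b$-iterated barycentric agreement so that each process selects a vertex of $\bary^b\skel^t\sigma$; finally each process outputs $\delta$ applied to its chosen vertex. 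Since the chosen vertex lies in $\bary^b\sigma$ and $\delta(\bary^b\sigma)\subseteq\Delta(\sigma)$, the carrier condition holds. Both primitives are $t$-resilient solvable by $n$ non-anonymous processes (for instance $(t+1)$-set agreement via $t+1$ safe-agreement objects exactly as in Fig.~\ref{alg:set_agreement}, and barycentric agreement wait-free), so the protocol is correct.

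For necessity ($\Rightarrow$) I would start from a $t$-resilient protocol $P$ solving $T$ and pass to its colorless protocol complex $\cP$, equipped with the protocol carrier map $\Xi:\cI\to 2^{\cP}$ recording reachable colorless views and a simplicial decision map $\varphi:\cP\to\cO$ satisfying $\varphi(\Xi(\sigma))\subseteq\Delta(\sigma)$ for all $\sigma\in\cI$. It then suffices to build a continuous map $g:|\skel^t\cI|\to|\cP|$ carried by $\Xi$, since $f=|\varphi|\circ g$ is then a continuous map $|\skel^t\cI|\to|\cO|$ carried by $\Delta$. I would construct $g$ by induction on the skeleta of $\skel^t\cI$: on a vertex $\{v\}$ send it into the nonempty complex $|\Xi(\{v\})|$ (a solo execution), and extend across each $k$-simplex $\sigma$ with $k\le t$. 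Extending across $\sigma$ means extending a map of the boundary sphere $\partial|\sigma|\cong S^{k-1}$, which already lands in $|\Xi(\partial\sigma)|\subseteq|\Xi(\sigma)|$, over the disk $|\sigma|\cong D^{k}$ inside $|\Xi(\sigma)|$; this is possible precisely when $|\Xi(\sigma)|$ is $(k-1)$-connected.

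The main obstacle is the single topological fact that powers this direction: the $t$-resilient colorless protocol complex is $(t-1)$-connected, so that for $\sigma$ with $\dim\sigma=k\le t$ the subcomplex $|\Xi(\sigma)|$ is $(k-1)$-connected and the extension of $g$ across $\sigma$ never stalls. One establishes this by modeling $t$-resilient runs as iterated layered immediate-snapshot executions and analyzing the induced subdivision of $|\sigma|$: a single $t$-resilient layer refines the complex without lowering connectivity, via the Nerve Lemma applied to the cover of the protocol complex indexed by the set of processes observed in the first layer, whose nonempty intersections are themselves protocol complexes of smaller instances; iterating layers raises connectivity up to the $(t-1)$ bound, and this is exactly where the parameter $t$ enters, since allowing more than $t$ crashes breaks the high-connectivity of the relevant nerves and leaves the skeleton induction blocked below dimension $t$. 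Because this connectivity analysis is precisely the content of the Herlihy–Rajsbaum result cited as Theorem~\ref{thm:top_char_epo}, in the paper I would invoke it rather than reprove it, and devote the argument to the reduction steps above.
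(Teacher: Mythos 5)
The paper does not prove this statement at all: it is imported verbatim as Theorem~4.3 of Herlihy--Rajsbaum~\cite{herlihy2010topology} and used as a black box to conclude the ``if'' part of Theorem~\ref{thm:equivalence}. So the conclusion you reach in your last sentence --- invoke the citation rather than reprove it --- is exactly the paper's treatment, and your sufficiency direction is moreover a correct reconstruction: it is the non-anonymous twin of Lemma~\ref{thm:top_if}, with $(t+1)$-set agreement confining the live inputs to a simplex of $\skel^t\cI$, iterated barycentric agreement landing each process on a vertex of $\bary^b\skel^t\sigma$, and Lemma~\ref{thm:approximation} supplying the simplicial map $\delta$ carried by $\Delta$.

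The one genuine defect is in your necessity direction, and it is a circularity rather than a wrong idea. The skeletal-induction reduction (build $g:|\skel^t\cI|\to|\cP|$ carried by the protocol carrier map $\Xi$, then set $f=|\varphi|\circ g$) is the standard and correct scheme, but the fact that makes it go through --- $(k-1)$-connectedness of $|\Xi(\sigma)|$ for every $k$-simplex $\sigma$ with $k\le t$, i.e.\ the $(t-1)$-connectivity of the $t$-resilient colorless protocol complex --- cannot be discharged by citing ``the Herlihy--Rajsbaum result cited as Theorem~\ref{thm:top_char_epo}'', because that is the very statement being proved. What you would need to invoke (or establish via the layered immediate-snapshot and Nerve Lemma analysis you only gesture at) is the connectivity lemma \emph{internal} to~\cite{herlihy2010topology}, which is an independent and strictly weaker statement than the characterization theorem itself. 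As written, your argument either rests on an unproved connectivity claim or cites the theorem to prove itself; either way the necessity direction is not self-contained, whereas your sufficiency direction is.
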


Note that the protocol appeared in the proof of Lemma~\ref{thm:top_if} only makes use of 
a finite number of atomic weak set objects, which are constructed on top of a single 
atomic weak set object.
Thus, every colorless task that is $t$-resilient solvable by $n$ anonymous processes
is solved with $n$ atomic registers.
The space complexity lower bound of Theorem~\ref{thm:equivalence} follows.

\subsection{Simulation-Based Approach}

We now prove the if part of Theorem~\ref{thm:equivalence} by a simulation, 
which is an anonymous variant of the BG-simulation~\cite{BorowskyGLR01}.
More precisely, we show that $n$ anonymous $t$-resilient processes 
with atomic weak set objects can simulate $n$ non-anonymous $t$-resilient processes 
with \emph{atomic snapshot objects}.
We write anonymous simulators by $p_0$, $\ldots$, $p_{n-1}$ 
and non-anonymous simulated processes by $P_0$, $\ldots$, $P_{n-1}$.
Without loss of generality, we may assume that non-anonymous processes communicate via
a single $n$-ary atomic snapshot object and execute a \emph{full-information protocol}.
In the protocol, the process $P_i$ repeatedly writes its local state to the $i$-th 
component of the array, takes a snapshot of the whole array and update its state 
by the result of the snapshot until it reaches a termination state.
When the process reaches the termination state, it decides on the value obtained by
applying some predefined function $f$ to the state.

Our simulation algorithm for each simulator is presented in Fig.~\ref{alg:simulation}.
The algorithm makes use of a two dimensional array of safe agreement object 
$\safeagree[0\ldots][0\ldots n-1]$, where the column $\safeagree[0\ldots][i]$
is for storing simulated states of the process $P_i$.
The local variables $\mathit{view}_i$ and $\mathit{round}_i$ stand for the current 
simulated state and the current simulated round of $P_i$ respectively.
The function $\mathit{latest\_views}$ maps a set of tuples consists of
a process name, its simulated state, and its simulated round to
the array whose $i$-th component is the simulated view of $P_i$
associated with the largest simulated round number of $P_i$.
The function $\mathit{latest\_round}_i$ maps a set of the same kind to
the latest round number of $P_i$.

In the algorithm, each simulator first proposes its input value to $\safeagree[0][i]$ 
for all $P_0,\ldots,P_{n-1}$.
Then, the simulator repeats the following procedure for $P_0,\ldots,P_{n-1}$
in the round-robin manner until one of $P_0,\ldots,P_{n-1}$ reach a termination state:
it performs \resolve{} operation on $\safeagree[\mathit{round}_i][i]$;
if the return value of the \resolve{} operation is not $\perp$, 
the simulator adds the return value, with the name $P_i$ and its current simulated round,
to $\weakset{}$,
updates simulated state and round,
and proposes the new simulated state of $P_i$ to $\safeagree[\mathit{round}_i][i]$.
\begin{figure}[h]
\hrule \vspace{1mm} {
\setcounter{linenumber}{0}
\begin{tabbing}
bbb\=bbb\=bbb\=bbb\=bbb\=bbb\=bbb\=bbb\=bbb\=bbb\=bbb\=bbb\=  \kill

Shared variable : \\
    \>\texttt{atomic weak set} : $\weakset$\\
    \>\texttt{array of safe agreement objects} : $\safeagree[0\ldots][0\ldots n-1]$\\[2mm]

\textsc{Code for a process} \\[2mm] 
Local variable:\\
    \>\texttt{Value} $\mathit{view}_i$ init $\perp$ for $i=0,\ldots,n-1$\\
    \>\texttt{Integer} $\mathit{round}_i$ init $0$ for $i=0,\ldots,n-1$\\
    \>\texttt{Integer} $i$ init 0\\
    \>\texttt{Value} $\mathit{snap}$ init $\perp$\\[2mm]

Simulation($v$):\\
\nnll\> \textbf{for} $i=0,\ldots,n-1$ \textbf{do}\\
\nnll\>\>   $\safeagree[0][i].\propose{v}$\\
\nnll\> \textbf{while} true \textbf{do}\\
\nnll\>\>    \textbf{for} $i=0,\ldots,n-1$ \textbf{do} \\
\nnll\>\>\>     $\mathit{view}_i=\safeagree[\mathit{round}_i][i].\resolve{}$\\
\nnll\>\>\>     \textbf{if} $\mathit{view}_i$ is a termination state of $P_i$ \textbf{then}\\
\nnll\>\>\>\>       \textbf{return} $f(\mathit{view}_i)$\\
\nnll\>\>\>     \textbf{elseif} $\mathit{view}_i\neq\perp$ \textbf{then}\\
\nnll\>\>\>\>       $\weakset.\add{(P_i,\mathit{view}_i,\mathit{round}_i)}$\\
\nnll\>\>\>\>       $\mathit{snap}=\weakset.\get{}$\\
\nnll\>\>\>\>       $\mathit{view}_i=\mathit{latest\_views}(\mathit{snap})$\\
\nnll\>\>\>\>       $\mathit{round}_i=\mathit{latest\_round}_i(\mathit{snap})+1$\\ 
\nnll\>\>\>\>       $\safeagree[\mathit{round}_i][i].\propose{\mathit{view}_i}$\\
\end{tabbing}
\vspace{-6mm}
\hrule}
\caption{$n$ anonymous processes simulates $n$ non-anonymous processes}
\label{alg:simulation}
\end{figure}

By the use of safe agreement objects, simulators can agree on the return value 
of each simulated snapshot.
Note that there is no need to use a safe agreement object on each simulated update
because each value to be updated is deterministically determined by the return value
of the preceding simulated snapshot.
In the algorithm of Fig.~\ref{alg:simulation}, each simulator performs
\propose{} operations sequentially.
Thus, even though $t$ simulators crash, they block at most $t$ simulated processes
by the nontriviality property of the safe agreement object.
By these observations, we establish the following lemma:

\begin{lemma}
    If a colorless task $t$-resilient solvable by $n$ non-anonymous processes
    with atomic snapshot objects,
    it is also $t$-resilient solvable by $n$ anonymous processes 
    with atomic weak set objects.
\end{lemma}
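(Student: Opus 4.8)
The plan is to show that the algorithm of Fig.~\ref{alg:simulation} produces, for the simulated processes $P_0,\dots,P_{n-1}$, a legal $t$-resilient execution of the full-information atomic-snapshot protocol, and that the anonymous simulators decide exactly the values that the simulated processes decide in that execution. Since the non-anonymous protocol is assumed to solve the task $t$-resiliently, and the task is colorless (so any set of legally decided values forms an output simplex in $\Delta(s)$), this yields an anonymous $t$-resilient solution. I would organize the argument into four steps: (i) consistency of the simulated states, (ii) atomicity of the simulated snapshots, (iii) the bound of $t$ on blocked simulated processes, and (iv) termination and correctness of the anonymous decisions.

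For step (i), I would observe that for each pair $(r,i)$ the object $\safeagree[r][i]$ is proposed to only with states computed for round $r$ of $P_i$. By the Agreement property (Lemma~\ref{thm:agreement}), every simulator that resolves a non-$\perp$ value from $\safeagree[r][i]$ obtains the same view, say $\mathit{view}_i^{r}$, and by Validity (Lemma~\ref{thm:validity}) this view was actually proposed, hence legitimately computed from the preceding simulated snapshot. Consequently all simulators that advance $P_i$ to round $r+1$ add the identical tuple $(P_i,\mathit{view}_i^{r},r)$ to \weakset, so $P_i$ follows a single, well-defined trajectory of states, independent of which simulator drives it.

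Step (ii) is the main obstacle. I would linearize each simulated snapshot at the linearization point of the underlying atomic \weakset.\get{} that produced it, and show the result is a valid atomic-snapshot execution. Since \weakset{} is atomic and its contents grow monotonically, the sets returned by successive get operations are ordered by inclusion; because by step (i) the tuple recorded for each $(P_j,r)$ is unique, the function $\mathit{latest\_views}$ extracts from any such set a well-defined array whose $j$-th entry is the state of $P_j$ with the largest installed round. Monotonicity of \weakset{} then gives exactly the property required of atomic snapshots: if one simulated snapshot sees $P_j$ at round $r$, every later one (in get-linearization order) sees $P_j$ at some round $\ge r$, and each returned entry is the most recently installed write of that component. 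The delicate point to verify carefully is that the write modeled by \weakset.\add{} and the snapshot modeled by the immediately following \weakset.\get{} can be linearized as an adjacent write/snapshot pair consistent with the full-information protocol; the atomicity of the weak set and the uniqueness from step (i) make this possible, and a simulated snapshot always reflects its own process's latest write because the simulator performs add before get.

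For step (iii), I would use that each simulator performs its \propose{} operations one at a time, so a crashed simulator is mid-\propose{} on at most one object $\safeagree[r][i]$. By Nontriviality (Lemma~\ref{thm:nontriviality}), a simulated process $P_i$ is permanently blocked only if some simulator crashed while proposing to its current safe agreement object; hence $t$ simulator crashes block at most $t$ simulated processes, and the simulated execution is a legal $t$-resilient execution of the protocol. Finally, in step (iv), since the protocol is a $t$-resilient solution, at least $n-t$ simulated processes reach termination states with decisions whose values lie in $\Delta(s)$; each correct anonymous simulator, scanning $P_0,\dots,P_{n-1}$ round-robin, eventually resolves a terminated $\mathit{view}_i$ and returns $f(\mathit{view}_i)$. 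Because $f$ and the simulated trajectories are deterministic, all simulators deciding through the same $P_i$ return the same value, and the multiset of returned values is a subset of the vertices of the decided output simplex; colorlessness of the task then guarantees these outputs form a simplex in $\Delta(s)$ and hence solve it.
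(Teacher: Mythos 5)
Your proposal is correct and takes essentially the same approach as the paper: the paper's own proof consists of exactly the observations you formalize (the safe agreement objects force agreement on each simulated snapshot, and sequential \propose{} operations bound the number of blocked simulated processes by $t$), with the remaining details deferred to the proof of Theorem~5 of~\cite{BorowskyGLR01}. Your steps (i)--(iv) are a faithful reconstruction, adapted to the weak-set-based simulation of Fig.~\ref{alg:simulation}, of that cited argument.
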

The proof of the lemma is  similar to the proof of Theorem 5 in~\cite{BorowskyGLR01}.

The space complexity of the simulation of Fig.~\ref{alg:simulation} is
exactly $n$ atomic registers because a single atomic weak set object can simulate
an arbitrary finite number of atomic weak set objects and safe agreement objects
in the non-blocking manner.
This establishes the space complexity lower bound of Theorem~\ref{thm:equivalence}.


\section{Conclusion}

The theory of distributed computing in~\cite{herlihy2013distributed} has been
successful in characterizing task computability in a variety of shared-memory, message
passing and mobile robot models using  combinatorial topology.  
For the case of shared-memory models, it assumes 
that the processes, $p_0,\ldots,p_{n-1}$, 
communicate using SWMR registers.
Recently a characterization of wait-free colorless task solvability has been derived 
for  the anonymous case, where processes have no identifiers and communicate through multi-writer/multi-reader registers~\cite{yanagisawa2016wait-free}. 
In this paper we have extended this result to the case where at most $t$ processes may crash, $1\leq t<n$. 
Furthermore, we have shown that any $t$-resilient solvable colorless task can be $t$-resilient 
solvable anonymously using only $n$ MWMR registers.

Some of the avenues for future research are the following.
It would be interesting to look for lower bound on the number of MWMR registers needed
to solve specific colorless tasks. Also, to investigate which non-colorless tasks are
solvable in the anonymous setting.
We have derived our result through a series of reductions that seem interesting in themselves,
to study further anonymous computability, especially for long-lived objects (as opposed to tasks)
and uniform solvability (instead of a fixed number of processes $n$).
For this, it may be useful to extend our non-blocking implementation of the weak set object
to be wait-free. Also, to eliminate our assumption of finite  input complexes. 

\section*{Acknowledgement}
We would like to thank Petr Kuznetsov and Susumu Nishimura for helpful discussions
related to this paper. 

\bibliographystyle{plain}
\bibliography{references}

\begin{thebibliography}{10}

\bibitem{angluin1980local}
Dana Angluin.
\newblock Local and global properties in networks of processors.
\newblock In {\em Proceedings of the 12th Annual ACM Symposium on Theory of
  Computing (STOC)}, pages 82--93, 1980.

\bibitem{attiya2006adapting}
Hagit Attiya.
\newblock {Adapting to Point Contention with Long-Lived Safe Agreement}.
\newblock In Paola Flocchini and Leszek G{\k{a}}sieniec, editors, {\em 13th
  Int. Conf. Structural Information and Communication Complexity (SIROCCO)},
  volume 4056 of {\em Lecture Notes in Computer Science (LNCS)}, pages 10--23.
  Springer, 2006.

\bibitem{attiya2002computing}
Hagit Attiya, Alla Gorbach, and Shlomo Moran.
\newblock Computing in totally anonymous asynchronous shared memory systems.
\newblock {\em Information and Computation}, 173(2):162--183, 2002.

\bibitem{baldoni2010value-based}
Roberto Baldoni, Silvia Bonomi, and Michel Raynal.
\newblock Value-based sequential consistency for set objects in dynamic
  distributed systems.
\newblock In {\em Proceedings of the 16th International Euro-Par Conference
  (Euro-Par)}, volume 6271 of {\em Lecture Notes in Computer Science (LNCS)},
  pages 523--534, 2010.

\bibitem{BorowskyGLR01}
Elizabeth Borowsky, Eli Gafni, Nancy Lynch, and Sergio Rajsbaum.
\newblock The {BG} distributed simulation algorithm.
\newblock {\em Distributed Computing}, 14(3):127--146, 2001.

\bibitem{bouzid2016anonymity}
Zohir Bouzid and Corentin Travers.
\newblock {Anonymity-Preserving Failure Detectors}.
\newblock In Cyril Gavoille and David Ilcinkas, editors, {\em 30th
  International Symposium Distributed Computing (DISC)}, volume 9888 of {\em
  Lecture Notes in Computer Science (LNCS)}, pages 173--186. Springer, 2016.

\bibitem{CapdevielleAnonAg2017}
Claire Capdevielle, Colette Johnen, Petr Kuznetsov, and Alessia Milani.
\newblock On the uncontended complexity of anonymous agreement.
\newblock {\em Distributed Computing}, 30(6):459--468, Dec 2017.

\bibitem{CastanedaRenam:2011}
Armando Casta\~{n}eda, Sergio Rajsbaum, and Michel Raynal.
\newblock The renaming problem in shared memory systems: An introduction.
\newblock {\em Comput. Sci. Rev.}, 5(3):229--251, August 2011.

\bibitem{Chaudhuri1993more}
S.~Chaudhuri.
\newblock More choices allow more faults: Set consensus problems in totally
  asynchronous systems.
\newblock {\em Information and Computation}, 105(1):132 -- 158, 1993.

\bibitem{delporte-gallet2009two}
Carole Delporte-Gallet and Hugues Fauconnier.
\newblock Two consensus algorithms with atomic registers and failure detector
  $\omega$.
\newblock In {\em Proceedings of the 10th International Conference on
  Distributed Computing and Networking (ICDCN)}, volume 5408 of {\em Lecture
  Notes in Computer Science (LNCS)}, pages 251--262, 2009.

\bibitem{DFGRbootstrapTCS15}
Carole Delporte-Gallet, Hugues Fauconnier, Eli Gafni, and Sergio Rajsbaum.
\newblock Linear space bootstrap communication schemes.
\newblock {\em Theoretical Computer Science}, 561(Part B):122 -- 133, 2015.
\newblock Special Issue on Distributed Computing and Networking.

\bibitem{Ellen2008}
Faith Ellen, Panagiota Fatourou, and Eric Ruppert.
\newblock The space complexity of unbounded timestamps.
\newblock {\em Distributed Computing}, 21(2):103--115, Jul 2008.

\bibitem{fischer1985impossibility}
Michael~J. Fischer, Nancy~A. Lynch, and Michael~S. Paterson.
\newblock Impossibility of distributed consensus with one faulty process.
\newblock {\em J. ACM}, 32(2):374--382, April 1985.

\bibitem{gafni2009extended}
Eli Gafni.
\newblock The extended bg-simulation and the characterization of t-resiliency.
\newblock In {\em Proceedings of the Forty-first Annual ACM Symposium on Theory
  of Computing}, STOC '09, pages 85--92, New York, NY, USA, 2009. ACM.

\bibitem{GelashviliAnonCons2015}
Rati Gelashvili.
\newblock {\em On the Optimal Space Complexity of Consensus for Anonymous
  Processes}, pages 452--466.
\newblock Springer Berlin Heidelberg, Berlin, Heidelberg, 2015.

\bibitem{guerraoui2007anonymous}
Rachid Guerraoui and Eric Ruppert.
\newblock Anonymous and fault-tolerant shared-memory computing.
\newblock {\em Distributed Computing}, 20(3):165--177, 2007.

\bibitem{herlihy2013distributed}
Maurice Herlihy, Dmitry Kozlov, and Sergio Rajsbaum.
\newblock {\em Distributed computing through combinatorial topology}.
\newblock Morgan Kaufmann, 2013.

\bibitem{herlihy1997decidability}
Maurice Herlihy and Sergio Rajsbaum.
\newblock The decidability of distributed decision tasks (extended abstract).
\newblock In {\em Proceedings of the 29th Annual ACM Symposium on Theory of
  Computing (STOC)}, pages 589--598, 1997.

\bibitem{herlihy2003classification}
Maurice Herlihy and Sergio Rajsbaum.
\newblock A classification of wait-free loop agreement tasks.
\newblock {\em Theoretical Computer Science}, 291(1):55 -- 77, 2003.

\bibitem{herlihy2010topology}
Maurice Herlihy and Sergio Rajsbaum.
\newblock The topology of shared-memory adversaries.
\newblock In {\em Proceedings of the 29th ACM SIGACT-SIGOPS Symposium on
  Principles of Distributed Computing (PODC)}, pages 105--113, 2010.

\bibitem{HerlihyR12}
Maurice Herlihy and Sergio Rajsbaum.
\newblock {Simulations and reductions for colorless tasks}.
\newblock In {\em Proceedings of the 2012 ACM symposium on Principles of
  distributed computing}, PODC '12, pages 253--260, New York, NY, USA, 2012.
  ACM.

\bibitem{herlihy2017computingJ}
Maurice Herlihy, Sergio Rajsbaum, Michel Raynal, and Julien Stainer.
\newblock From wait-free to arbitrary concurrent solo executions in colorless
  distributed computing.
\newblock {\em Theor. Comput. Sci.}, 683:1--21, 2017.

\bibitem{herlihy1999topological}
Maurice Herlihy and Nir Shavit.
\newblock The topological structure of asynchronous computability.
\newblock {\em J. ACM}, 46(6):858--923, November 1999.

\bibitem{herlihy1990linearizability}
Maurice~P. Herlihy and Jeannette~M. Wing.
\newblock Linearizability: A correctness condition for concurrent objects.
\newblock {\em ACM Trans. Program. Lang. Syst.}, 12(3):463--492, July 1990.

\bibitem{jayanti1991wakeup}
Prasad Jayanti and Sam Toueg.
\newblock Wakeup under read/write atomicity.
\newblock In {\em Proceedings of the 4th International Workshop on Distributed
  Algorithms}, pages 277--288, 1991.

\bibitem{gsb2016}
Armando~Casta\ {n}eda, Damien Imbs, Sergio Rajsbaum, and Michel Raynal.
\newblock Generalized symmetry breaking tasks and nondeterminism in concurrent
  objects.
\newblock {\em SIAM Journal on Computing}, 45(2):379--414, 2016.

\bibitem{robotsIPDPS17}
S.~Rajsbaum, A.~Castañeda, D.~F. Peñaloza, and M.~Alcántara.
\newblock Fault-tolerant robot gathering problems on graphs with arbitrary
  appearing times.
\newblock In {\em 2017 IEEE International Parallel and Distributed Processing
  Symposium (IPDPS)}, pages 493--502, May 2017.

\bibitem{saraph2016asynchronous}
Vikram Saraph, Maurice Herlihy, and Eli Gafni.
\newblock {\em Asynchronous Computability Theorems for t-Resilient Systems},
  pages 428--441.
\newblock Springer Berlin Heidelberg, Berlin, Heidelberg, 2016.

\bibitem{yanagisawa2016wait-free}
Nayuta Yanagisawa.
\newblock Wait-free solvability of colorless tasks in anonymous shared-memory
  model.
\newblock In {\em Proceedings of the 18th International Symposium on
  Stabilization, Safety, and Security of Distributed Systems (SSS)}, pages
  415--429, 2016.

\bibitem{yanagisawa2017wait}
Nayuta Yanagisawa.
\newblock Wait-free solvability of colorless tasks in anonymous shared-memory
  model.
\newblock {\em Theory of Computing Systems}, 2017 (to appear).

\end{thebibliography}


\newpage
\section{Appendix}

\setcounter{lemma}{0}
\setcounter{theorem}{0}

%
%
%
%
%

\subsection{Correctness Proofs of Weak Set Implementation}

We give the correctness proofs of the weak free implementation.

\paragraph{Safety}
Given an operation $op$,  $invoc (op)$ denotes its invocation and
$resp(op)$ its response.


Let $H$ be an history of the algorithm. $H_{seq}$ denotes the sequential
history in which each operation of $H$ appears as if it has been
executed at a single point (the linearization) of time line.
We have to define linearization points and prove that:
\begin{itemize}
    \item
        the linearization point of each operation $\textsc{get}$ and 
        $\textsc{add}$ appear  between the beginning and the end of this operation, and 
    \item
        the sequential history that we get with these points respect the
        sequential specification of the weak set. 
\end{itemize}
Most of the details of the proofs are in the appendix.

Consider an history $H$, let $v$ be  a value or a set of values,
define time  $\tau_{v} $ as the first time, if any, that $v$ belongs to
all registers of $R$. When there is no such time,  $\tau_v$ is $\bot$.

\begin{lemma}(Lemma \ref{lemma:terminaison})\label{lemmaA:terminaison}
    If the operation $\textsc{add}(v)$ terminates then before the end of
    this operation $v$ belongs to all registers of $R$. 
    If the operation $\textsc{get}()$ terminates  and returns $V$ then
    before the end of this operation $V$ belongs to all registers of $R$. 
\end{lemma}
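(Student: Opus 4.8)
The plan is to exploit the atomicity of the underlying snapshot object. By the specification of the snapshot, a \scan{} returns an array that coincides with the actual contents of $R[0\ldots n-1]$ at some instant $\tau$ lying between the invocation and the response of that \scan{}. Since every \scan{} executed inside an \add{} or \get{} operation lies entirely within the operation's interval, such an instant $\tau$ necessarily occurs before the operation's response. Hence it suffices, in each terminating operation, to identify the particular \scan{} whose returned array already witnesses the desired ``$v$ (resp.\ the set $V$) in every component'' property, and then transport that property from the snapshot back to the registers themselves via atomicity.

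\textbf{The \add{v} case.} First I would observe that the operation returns only when the guard of the while loop (Line~\ref{li:while-add}) evaluates to false, i.e.\ when $\#\{r\mid v\in Snap[r]\}\ge n$, so that $v\in Snap[r]$ for every $r\in\{0,\ldots,n-1\}$. Here $Snap$ is the array produced by the last \scan{} executed before the return --- either the initial \scan{} of Line~\ref{li:scan1} (if the loop body is never entered) or the \scan{} of Line~\ref{li:scan2} in the final iteration. In both sub-cases $Snap$ is the value returned by a genuine \scan{}, so by atomicity there is a time $\tau$, within that \scan{} and hence before the end of the \add{v}, at which $R[i]=Snap[i]$ for all $i$. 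Since $v\in Snap[i]$ for every $i$, I conclude that $v\in R[i]$ for every $i$ at time $\tau$, which is exactly the claim.

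\textbf{The \get{} case.} This is handled identically, the only difference being the form of the guard. The operation returns $View$ only when the guard of Line~\ref{li:while-get} is false, i.e.\ when $View=Snap[r]$ for all $r$, where again $Snap$ is the array of the last executed \scan{} (Line~\ref{li:scan3} or~\ref{li:scan4}) and $V=View$ is the returned set. Note that between that \scan{} and the return the local $View$ is not modified further, so the $View$ compared in the guard is precisely the value $V$ returned. Atomicity then yields a time $\tau$ before the response at which $R[i]=Snap[i]=View=V$ for every $i$, so the set $V$ equals every register at that instant, as required.

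\textbf{Main point requiring care.} The only delicate part --- rather than a deep obstacle --- is the bookkeeping that pins down the relevant \scan{} and places its linearization time correctly. One must treat uniformly the case where the loop is skipped (the last \scan{} is the pre-loop one, Line~\ref{li:scan1} or~\ref{li:scan3}) and the case where it executes at least once (the last \scan{} is Line~\ref{li:scan2} or~\ref{li:scan4}), and one must observe that the local $View$ occurring in the \get{} guard has already absorbed $vals(Snap)$, so that the exit condition $View=Snap[r]$ for all $r$ is consistent with $V=View$ being returned. Once the correct \scan{} is fixed, the conclusion is an immediate application of the snapshot object's atomicity.
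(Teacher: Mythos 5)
Your proof is correct and follows essentially the same route as the paper's: both argue that termination forces the while-loop guard to be false for the array $Snap$ returned by the last \scan{} (Line~\ref{li:scan1} or~\ref{li:scan2} for \add{}, Line~\ref{li:scan3} or~\ref{li:scan4} for \get{}), and that the registers therefore held $v$ (resp.\ $V$) at the moment that \scan{} took effect, which is before the operation's response. Your version is merely more explicit about the snapshot's linearization point and about $View$ being unchanged between the final \scan{} and the return, details the paper's terser proof leaves implicit.
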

\begin{proof}
    If the $ \textsc{add}(v)$ terminates then the condition
    Line~\ref{li:while-add} is false.  
    $v$ belongs to all cells of $Snap$. $Snap$ comes from the snapshot
    Line~\ref{li:scan1} or  Line~\ref{li:scan2}. 
    When the process executes this line, $v$ belongs to all registers of $R$.

    If the $ \textsc{get}()$ terminates and returns $V$ then the
    condition Line~\ref{li:while-get} is false.  
    $V$ belongs to all cells of $Snap$. $Snap$ comes from the snapshot
    Line~\ref{li:scan3} or  Line~\ref{li:scan4}. 
    When the process executes this line, $V$ belongs to all registers of $R$.
\end{proof}

We prove that the algorithm is non-blocking, namely, if processes
perform operations forever, an infinite number of  operations
terminates.

By contradiction, assume that there is only a finite number of
operations $\textsc{get}$ and $\textsc{add}$ and some
operations made by correct processes do not terminate. 

Operations $\textsc{add}$ or $\textsc{get}$ may not terminate because the 
termination conditions of the while loop are not satisfied (Lines~\ref{li:while-add}
or~\ref{li:while-get}): for an $\textsc{add}(v)$ operation, in each $scan$ made by 
the process, $v$ is not in at least one of the registers of $R$, and for a $\textsc{get}()$ 
operation, in each $snap$, all the registers are not equal to the view of the process.

There is a time $\tau_0$ after which there is no new process crash
and all processes that terminate $\textsc{get}$ or $\textsc{add}$ operations in the
run have already terminated. Consider the set $N$ of processes alive
after time $\tau_0$ that do not terminate operations in the run. Note
that after time $\tau_0$ only processes in $N$ take steps and 
as no process in $N$ may crash each process in $N$ makes an
infinite number of steps. 

From observation~\ref{obsA:integrity} if there is a finite number of
operations, then all variables $View$ are subsets of a finite set of values.
Moreover from observation~\ref{obsA:increasing-view} the views of each
process are increasing, then there is a time
$\tau_1>\tau_0$ after which the view of each process $p$ in $N$ converges
to a \emph{stable view} $SView_p$: after time $\tau_1$ forever  the view of $p$
is $SView_p$.
In the following $SV$ denotes $\{SView_p| p\in N\}$ the set of all stable
views for processes in $N$. Observe that:
\begin{observation}( Observation \ref{obs:vinView}) \label{obsA:vinView}
    If  $p$ does not terminate an $\textsc{add}(v)$ then $v\in SView_p$.
\end{observation}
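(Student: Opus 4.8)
The plan is to track the local variable $View$ of process $p$ throughout its stalled $\textsc{add}(v)$ operation, using two facts: that $View$ is monotonically non-decreasing, and that a non-terminating process never leaves the operation it is blocked in.

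First I would observe that in the code of $\textsc{add}(v)$, the assignment $View = View \cup vals(Snap) \cup \{v\}$, executed once just before the while loop, inserts $v$ into $View$. From that point on, the only further assignments to $View$ are of the form $View = vals(Snap) \cup View$ inside the loop, which add values via union and never remove any; moreover $View$ is never re-initialized. Hence $v \in View$ holds at every step from the moment $p$ enters the while loop of this $\textsc{add}(v)$ call and onward.

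Next I would connect this to the stable view. Because $p$ does not terminate $\textsc{add}(v)$ and executes its operations sequentially, $p$ remains inside this single call forever; and since $p$ makes infinitely many steps, it eventually enters and then stays in the while loop. Thus there is a time $t > \tau_1$ at which $p$ is executing the while loop of $\textsc{add}(v)$, so $v \in View$ at time $t$. By the definition of $\tau_1$, the value of $p$'s variable $View$ at any time after $\tau_1$ equals $SView_p$. Combining the two facts yields $v \in SView_p$, as claimed.

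I do not anticipate a genuine obstacle here: the statement reduces to a simple invariant ($v$ enters $View$ once and never leaves, because $View$ only grows) together with the already-established convergence of $View$ to $SView_p$. The only subtlety worth stating explicitly is that $View$ is never reset—neither across operations nor inside the loop—so that once $v$ is added it persists, which is exactly what lets us read off $v \in SView_p$ from any sufficiently late step of the stalled operation.
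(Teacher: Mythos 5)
Your proof is correct and matches the reasoning the paper leaves implicit: the observation is stated without proof precisely because, as you show, $v$ enters $View$ via the assignment $View = View \cup vals(Snap) \cup \{v\}$ before the while loop, $View$ only grows thereafter, and a non-terminating correct process stays in that loop forever, so its stable view $SView_p$ must contain $v$. Your explicit note that $View$ is never reset across or within operations is exactly the invariant the authors take for granted (their Observation on increasing views).
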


After time $\tau_1$ processes only update $R$ with their stables views $SView$, 
and as each process in $N$ updates infinitely often each register of $R$ with 
its $SView$ there is a time $\tau_2$ after which all registers in $R$ contain 
only stable views of processes in $N$:
\begin{observation}(Observation \ref{obs:st2})\label{obsA:st2})
    After time $\tau_2$ for all $i$, $R[i]\in SV$.
\end{observation}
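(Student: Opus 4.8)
The plan is to show that once the views have stabilized every write into $R$ deposits a stable view, and that the round-robin index scheme forces each component of $R$ to be overwritten, so that after finitely many further steps no non-stable value can survive anywhere in the array.

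First I would note that, since $\tau_1 > \tau_0$, after $\tau_1$ only processes of $N$ take steps, and by the choice of $\tau_1$ each such $p$ has $View = SView_p \in SV$ permanently. The only operation that modifies a register is an update (line~\ref{li:update1} inside \add{v} and line~\ref{li:update2} inside \get{}), and it writes the current value of $View$ into the component indexed by $next$. Hence every write occurring after $\tau_1$ stores a member of $SV$; in particular, once a component $R[i]$ has been written after $\tau_1$ it holds an element of $SV$ from then on, because every later write to it is again some $SView_q \in SV$.

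Next I would establish that each component is indeed overwritten past $\tau_1$. Every $p \in N$ executes its main loop forever and advances its index by $next = (next+1) \bmod n$, so it issues an update to each index $i \in \{0,\dots,n-1\}$ infinitely often. Thus for each $i$ there is a first instant after $\tau_1$ at which some process of $N$ writes $R[i]$; letting $\tau_2$ be the maximum of these $n$ instants, after $\tau_2$ every component has been written at least once beyond $\tau_1$, and therefore $R[i] \in SV$ for all $i$, as claimed.

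The step deserving the most care is the persistence invariant: that a component, once it carries a stable view after $\tau_1$, never reverts to a non-stable value. This hinges on combining the two facts isolated above --- that processes outside $N$ are silent beyond $\tau_0$, and that the views of processes in $N$ are frozen beyond $\tau_1$ --- so that the entire set of values written after $\tau_1$ is contained in $SV$. There is no real difficulty here, only the need to assemble these observations before invoking the coverage argument; the conclusion then feeds directly into the contradiction obtained by examining a minimal element $SView_0$ of $SV$.
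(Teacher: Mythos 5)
Your proposal is correct and takes essentially the same route as the paper: after $\tau_1$ every update deposits some stable view $SView_p\in SV$, and since each process of $N$ cycles its index through all $n$ registers forever, each register is eventually overwritten, so one can take $\tau_2$ to be the latest of the first post-$\tau_1$ writes to the $n$ registers. The paper compresses this into the single sentence preceding the observation; your version only makes the choice of $\tau_2$ and the persistence of stable values in each register explicit.
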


\paragraph{liveness}
Among stable views consider any minimal view $SView_0$ for inclusion,
i.e. for all $S\in SV$, $S\subseteq SView_0$ implies $SView_0=S$.

Consider any process $p\in N$ having the $SView_0$ as stable view, eventually 
$p$ makes a scan of the memory $R$ (Line~\ref{li:scan1} or~\ref{li:scan2} for 
$\textsc{add}$,  Line~\ref{li:scan3} or~\ref{li:scan4} for $\textsc{get}$). 
Let $Snap$ be the array returned by the $scan$. $Snap$ is the value of the 
array of registers $R$ at some time after $\tau_2$. 
Then $p$ adds $\bigcup_{1\leq i\leq n}  Snap[i]$ to its view $SView_0$.
The $SView_0$ being stable we have $\bigcup_{1\leq i\leq n} Snap[i]\subseteq SView_0$ 
and then for all $i$, $Snap[i] \subseteq   SView_0$.
But by observation~\ref{obsA:st2}, $R[i]=Snap[i]$ is a stable view $S \in SV$. 
Then by the minimality of $SView_0$,  for all $i$ we have $Snap[i]=SView_0$.
Then consider the two following cases:
\begin{itemize}
    \item
        if $p$ is performing an $\textsc{add}(v)$, as $v \in SView_p=SView_0$, 
        for all $i$, $v\in Snap[i]$ and the loop condition $(\# \{r |v $ in
        $ Snap[r]\} <  n)$ (Line~\ref{li:while-add}) is false and $p$
        terminates $\textsc{add}(v)$-- A
        contradiction
    \item
        if $p$ is performing an $\textsc{get}()$, then the loop continuation condition 
        $(\#\{r |View= Snap[r]\} <   n)$ is false and $p$ terminates
        operation $\textsc{get}()$ --- A contradiction
\end{itemize}
We deduce that there is no minimal stable view proving that $SV=\emptyset$
and also $N=\emptyset$.
\paragraph
{\it Linearization points for operations $\textsc{add}$ and $\textsc{get}$}:

By Lemma~\ref{lemmaA:terminaison}, $\tau_v\neq \bot$ for each operation
$\textsc{add}(v)$ that terminates  
and $\tau_V\neq \bot$ for each operation $\textsc{get}()$ that
terminates  and returns $V$.

If $\tau_v\neq \bot$, the linearization point $\tau_{op}$ of an
operation $op=\textsc{add}(v)$ is $max\{ \tau_{v}, invoc(op) \}$.     
If $\tau_v=\bot$, the operation $op$ does not terminate and   
is not linearized.

The linearization point $\tau_{op}$ of an operation $op=\textsc{get}()$ 
that returns $V$ is $max\{ \tau_{V}, invoc(op) \}$. A $\textsc{get}()$ 
operation that does not terminate is not linearized. 

Directly from the definition and from Lemma\ref{lemmaA:terminaison},
the linearization point $\tau_{op}$ occurs between the invocation and 
the response of $op$:
\begin{lemma}
    Let $op$ be an operation of $H$. If $\tau_{op}$ is defined then $
    \tau_{op}$ belongs to $[invoc(op), resp(op)]$. 
    If $\tau_{op}$ is undefined then $op$ does not terminate and is not linearized.
\end{lemma}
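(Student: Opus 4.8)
The plan is to handle both operation types at once, since in each case the linearization point has the common form $\tau_{op}=\max\{\tau_x,\,invoc(op)\}$, where $x=v$ when $op=\add{v}$ and $x=V$ when $op=\get{}$ returns $V$. With this form the lower bound $invoc(op)\le\tau_{op}$ is immediate, as $invoc(op)$ is one of the two arguments of the maximum. The content of the first assertion therefore reduces to the upper bound $\tau_{op}\le resp(op)$.

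To establish the upper bound I would first note that $\tau_{op}$ being defined forces $op$ to terminate: by construction a linearization point is assigned to an \add{} only when $\tau_v\neq\bot$, and to a \get{} only when it terminates and returns some $V$. Granting termination, the termination lemma (Lemma~\ref{lemmaA:terminaison}) guarantees that $x$ belongs to all registers of $R$ before the end of $op$, i.e.\ at some time no later than $resp(op)$. Since $\tau_x$ is defined as the \emph{first} time at which $x$ belongs to all registers of $R$, the existence of such a time before $resp(op)$ yields $\tau_x\le resp(op)$. Together with $invoc(op)\le resp(op)$ this gives $\tau_{op}=\max\{\tau_x,\,invoc(op)\}\le resp(op)$, finishing the first part.

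For the second assertion I would read off the case split in the definition of $\tau_{op}$. For $op=\add{v}$ the point is undefined precisely when $\tau_v=\bot$; the contrapositive of Lemma~\ref{lemmaA:terminaison} then shows that $op$ does not terminate, and by definition it is not linearized. For $op=\get{}$ the point is undefined precisely when $op$ fails to terminate, since any terminating \get{} returns some $V$ (with $\tau_V\neq\bot$) and is thus linearized; such a non-terminating \get{} is again not linearized by construction. In both cases an undefined $\tau_{op}$ entails non-termination and non-linearization.

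I do not expect a genuine obstacle: the statement is a direct bookkeeping consequence of Lemma~\ref{lemmaA:terminaison} together with the definition of the linearization points. The only step needing slight care is translating ``before the end of the operation'' in Lemma~\ref{lemmaA:terminaison} into $\tau_x\le resp(op)$, which relies on interpreting $\tau_x$ as the \emph{earliest} moment at which $x$ occupies every register, so that its occurrence is witnessed at or before $resp(op)$.
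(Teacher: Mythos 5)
Your overall route is the same as the paper's: the paper offers no separate argument for this lemma beyond saying it follows directly from the definition of the linearization points and Lemma~\ref{lemma:terminaison}, and your lower-bound/upper-bound bookkeeping is exactly that argument spelled out. Your treatment of the second assertion (undefined $\tau_{op}$ implies non-termination, via the contrapositive of Lemma~\ref{lemma:terminaison}, plus non-linearization by definition) is correct.

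There is, however, one false intermediate claim in your upper-bound argument: definedness of $\tau_{op}$ does \emph{not} force $op$ to terminate when $op=\add{v}$. The linearization point of \add{v} is defined whenever $\tau_v\neq\bot$, and $v$ can come to occupy all the registers without this particular \add{v} ever terminating: the invoking process may crash right after invocation while another process adds the same value $v$, or $v$ may spread through the $View$ variables of other processes and get written into every register. Indeed, the definition deliberately linearizes such pending \add{} operations --- this matters for the linearizability argument, since a \get{} may return $v$ and its return set must then be explained by a linearized \add{v} preceding it --- so ``defined $\Rightarrow$ terminated'' is not merely unjustified by your appeal to the construction, it is wrong. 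The damage is contained: for a pending operation there is no response event, so the interval $[invoc(op),resp(op)]$ constrains $\tau_{op}$ only from below (equivalently, one takes $resp(op)=\infty$), and your max formula already gives $\tau_{op}\geq invoc(op)$ in that case. The correct phrasing is a case split: if $op$ never terminates, only the lower bound is to be proven and it is immediate from the max; if $op$ terminates, apply Lemma~\ref{lemma:terminaison} exactly as you do. With that repair your proof coincides with the paper's intended one.
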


In the following we consider the next snapshot operation on $R$ of each process.
This next snapshot operation  is either  a $scan$ or an $update$ or there is 
no next snapshot operation (when the process is about to satisfy the termination 
loop condition: $card\{r|v \in Snap[r] \}\geq n$ for an $add$ and 
$card\{r|View= Snap[r] \}\geq n$ for an $update$).

\noindent
Consider any time $\tau$. Define $r_v(\tau)$, $w_v(\tau)$, $c_v(\tau)$
and $\alpha_v(\tau)$:
\begin{itemize}
    \item
        $r_v(\tau)$ is the number of processes for which, after time
        $\tau$, the next snapshot operation is a $scan$. 
    \item
        $w_v(\tau)$ is the number of processes such that (1) $v \in View$ at
        time $\tau$ and for which after time $\tau$ the next snapshot operation 
        is an $update$,  or (2)
        there is no next snapshot operation for that process 
        (the process has finished -or is going to finish- its main loop or it takes no more steps).
    \item
        $c_v(\tau)$ is the number of registers that contains $v$ at time $\tau$.
    \item
        $\alpha_v(\tau)$ is defined: $\alpha_v(\tau)=  r_v(\tau)+w_v(\tau) +c_v(\tau)$ 
\end{itemize}
We make first some easy observations.

\noindent
For a process the next snapshot operation  is either a $snap$ or an $update$ or nothing:
\begin{observation}
    \label{obsA:trivial}
    $r_v(\tau)+w_v(\tau)\leq n$.
\end{observation}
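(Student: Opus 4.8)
The plan is a straightforward counting argument resting on the trichotomy stated just above the observation: for every process, its next snapshot operation on $R$ after time $\tau$ is either a \scan{}, or an \update{}, or there is no next snapshot operation at all. These three possibilities are mutually exclusive, so they partition the set $\Pi$ of $n$ processes into three disjoint groups.

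First I would identify the group from which each quantity is drawn. By definition, $r_v(\tau)$ counts exactly the processes in the first group, whose next operation is a \scan{}. The quantity $w_v(\tau)$, on the other hand, counts only processes whose next snapshot operation is an \update{} (and which additionally satisfy $v\in View$ at time $\tau$), together with all processes in the third group for which there is no next snapshot operation. In particular, every process contributing to $w_v(\tau)$ lies in the second or third group.

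The key step is then to observe disjointness: no process can be counted by both $r_v(\tau)$ and $w_v(\tau)$, because the first group (next is a \scan{}) is disjoint from the union of the second and third groups (next is an \update{}, or there is no next snapshot). Hence the processes counted by $r_v(\tau)$ and those counted by $w_v(\tau)$ form two disjoint subsets of $\Pi$, and since $\#\Pi=n$ I conclude $r_v(\tau)+w_v(\tau)\leq n$. There is no real obstacle here; the only point requiring a little care is that $w_v(\tau)$ counts merely a subset of the \update{}-next processes (those with $v$ already in their view), but this restriction only decreases the count and is therefore harmless for an upper bound.
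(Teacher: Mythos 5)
Your proof is correct and matches the paper's own justification, which consists precisely of the remark that each process's next snapshot operation is either a \scan{}, an \update{}, or nothing, making the sets counted by $r_v(\tau)$ and $w_v(\tau)$ disjoint subsets of the $n$ processes. Your additional note that $w_v(\tau)$ only counts a subset of the \update{}-next processes, which is harmless for the upper bound, is exactly the right observation and is implicit in the paper.
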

\begin{observation}
    \label{obsA:integrity}
    All values in  variables $View$ have been proposed by some $\textsc{get}$
\end{observation}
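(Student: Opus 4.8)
The plan is to establish a stronger invariant than the statement itself and then read off the observation as a special case. Concretely, I would prove that throughout any execution $H$ the following holds at every point in time: every value contained in a local variable $View$ of any process, and every value belonging to any register $R[i]$, is an argument of some \add{} operation that has already been invoked. The observation is the restriction of this invariant to the $View$ variables. (I read the statement with the understanding that the values originate from \add{} operations, since a \get{} carries no argument and introduces no new value into the system, so the only place a fresh value can enter is the argument of an \add{}.)

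I would prove the invariant by induction on the sequence of atomic steps of $H$. For the base case, every $View$ is initialized to $\emptyset$ and every register initially contains no value (the set is initially empty), so the invariant holds vacuously. For the inductive step I would consider each kind of step that can modify a $View$ or a register. A \scan{} (Lines~\ref{li:scan1}--\ref{li:scan2} and \ref{li:scan3}--\ref{li:scan4}) only copies the current register contents into the local array $Snap$; by the induction hypothesis those contents are already arguments of \add{} operations, and the snapshot returns register contents at a single instant, so no new value is synthesized. The assignment $View = View \cup vals(Snap) \cup \{v\}$ inside \add{v} enlarges $View$ only by $vals(Snap)$ (legitimate by the $Snap$ case) and by $v$, which is by definition the argument of the very \add{v} operation being executed. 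The assignment $View = vals(Snap) \cup View$ used in both \add{} and \get{} adds only values already in $Snap$ or in $View$, hence already legitimate. Finally an \update{next,View} (Lines~\ref{li:update1} and~\ref{li:update2}) writes the current $View$ into a register; by the induction hypothesis every value of $View$ is an \add{} argument, so the updated register still satisfies the invariant. This exhausts the ways in which values can propagate, and the invariant is preserved by every step.

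The only genuinely delicate point is the interaction with the snapshot object: one must use that a \scan{} returns a coherent view of the registers as they stood at a single instant (the linearization point of the snapshot), so that it cannot manufacture a value never written by any \update{}. Once this is granted, the argument is a routine structural induction with no further subtlety. The observation then follows immediately, and in particular, in a run with only finitely many operations there are finitely many \add{} operations, each contributing a single argument value; hence every $View$ is a subset of this finite set of values, which is exactly the consequence used in the non-blocking (liveness) argument.
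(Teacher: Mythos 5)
Your proposal is correct and matches what the paper intends: the paper states this observation without any proof, treating it as immediate from code inspection, and your structural induction (values enter only as the argument $v$ in an \textsc{add}, and otherwise merely propagate through $Snap$, $View$, and register updates) is exactly the routine argument left implicit. You were also right to read the statement's ``\textsc{get}'' as a typo for ``\textsc{add}'', since \textsc{get} takes no argument and the observation is used precisely to bound the set of values by the finitely many \textsc{add} arguments.
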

\begin{observation}
    \label{obsA:increasing-view}
    Considering the inclusion $\subseteq$, for each process, variable $View$ is not decreasing
\end{observation}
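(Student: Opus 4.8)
The plan is to prove this purely by inspection of the pseudocode, since $View$ is a \emph{local} variable and its value changes only through explicit assignments in the code of a single process. First I would enumerate every line at which $View$ is written. In the \add{} operation these are the assignment $View = View \cup vals(Snap) \cup \{v\}$ following the initial \scan{} (Line~\ref{li:scan1}) and the assignment $View = vals(Snap) \cup View$ following the in-loop \scan{} (Line~\ref{li:scan2}); in the \get{} operation they are the two assignments of the form $View = vals(Snap) \cup View$, following the \scan{} of Line~\ref{li:scan3} and the \scan{} of Line~\ref{li:scan4}, respectively. These are the only places where $View$ is assigned.

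The key observation is that in every one of these assignments the current value of $View$ appears as one of the operands of the union on the right-hand side. Hence each assignment replaces $View$ by a set that contains its previous value, so after each write the new value of $View$ contains its old value, i.e.\ the value grows (weakly) under $\subseteq$. Between two consecutive writes $View$ is not touched and is therefore constant. It follows that, within a single operation, the sequence of values taken by $View$ is non-decreasing under inclusion.

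The only point requiring a moment's care is that the claim must span \emph{all} of a process's operations, not just one: a process may execute many \add{} and \get{} operations over its lifetime, and $View$ must not decrease across operation boundaries either. Here I would note that $View$ is declared once with initialization $\emptyset$ and that neither the header of \add{} nor that of \get{} resets it — each operation begins only by setting the loop index $next=0$. Consequently $View$ carries over unchanged from the end of one operation to the start of the next, and by the per-operation argument above it continues to grow only by union. Chaining these observations over the whole execution yields that $View$ is non-decreasing under inclusion, as claimed. There is no real obstacle here; the statement is an immediate structural consequence of the fact that every write to $View$ is a union with its current value.
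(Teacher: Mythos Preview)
Your argument is correct and is exactly the intended one: the paper states this as an ``easy observation'' without proof, and the natural justification is precisely the code inspection you give, noting that every assignment to $View$ is a union with its current value and that $View$ is never reset between operations.
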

\noindent
\begin{observation}
    \label{obsA:updatescan}
    Each $update$ made by a process is followed for this process by a
    $scan$ or the process stops to take step.
\end{observation}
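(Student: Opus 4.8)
The plan is to prove Observation~\ref{obsA:updatescan} by a direct, finite inspection of the pseudocode in Figure~\ref{fig:aws}, since the claim is purely syntactic: the only snapshot operations in the code are the $scan$ calls and the $update$ calls, and there are exactly two occurrences of $update$, one in each of the two operations of the object. I would simply determine, for each of these two $update$ instructions, what the next instruction accessing the shared array $R$ is along the control-flow path leaving it.

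First I would treat the \add{v} operation. Its only $update$ is at line~\ref{li:update1}, inside the body of the while loop. The single instruction separating it from the $scan$ at line~\ref{li:scan2} is the purely local assignment $next=(next+1)\bmod n$, which reads and writes no register. Hence, if the executing process performs any step at all after line~\ref{li:update1}, the first such step that touches $R$---i.e.\ its next snapshot operation---is precisely the $scan$ at line~\ref{li:scan2}. The \get{} operation is entirely symmetric: its unique $update$ is at line~\ref{li:update2}, again followed only by the local increment $next=(next+1)\bmod n$ before the $scan$ at line~\ref{li:scan4}, so once more the next snapshot operation after this $update$ is that $scan$.

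Since lines~\ref{li:update1} and~\ref{li:update2} are the only $update$ instructions in the algorithm, these two cases are exhaustive, and in each the next snapshot operation after the $update$ is a $scan$. The sole way the conclusion can fail is for the process to cease taking steps---by crashing or otherwise halting---between executing the $update$ and reaching the corresponding $scan$, which is exactly the alternative ``the process stops to take step'' in the statement. I do not anticipate any real obstacle here; the only point requiring care is to confirm that no register-accessing instruction is hidden between each $update$ and its following $scan$, and a glance at the code shows that only the local $next$-increment intervenes.
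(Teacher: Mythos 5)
Your proof is correct and matches the paper's treatment: the paper states this as an ``easy observation'' with no written proof precisely because it follows from the direct code inspection you carry out, namely that the only two $update$ instructions (lines~\ref{li:update1} and~\ref{li:update2}) are each followed in program order only by a local increment of $next$ before the $scan$ of the same loop iteration (lines~\ref{li:scan2} and~\ref{li:scan4}). Your explicit check that no shared-memory access and no loop exit can occur between the $update$ and that $scan$ is exactly the justification the observation relies on.
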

\noindent

\begin{observation}
    $\alpha_v(\tau)$ may only be modified by a $scan$ (Lines~\ref{li:scan1},
    \ref{li:scan2}, \ref{li:scan3} and \ref{li:scan4}) or an $update$ (Lines
    \ref{li:update1} and \ref{li:update2}).
\end{observation}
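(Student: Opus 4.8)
The plan is to argue line by line that the only steps able to modify $\alpha_v=r_v(\tau)+w_v(\tau)+c_v(\tau)$ are the scans of Lines~\ref{li:scan1}--\ref{li:scan4} and the updates of Lines~\ref{li:update1} and~\ref{li:update2}; every other, purely local, step leaves all three summands fixed. I would treat the three counters separately. First, $c_v(\tau)$ counts the registers currently holding $v$, and a register changes its contents only when it is written. The only writes are the \update{} operations at Lines~\ref{li:update1} and~\ref{li:update2}; a \scan{} only reads, and the local assignments ($next=0$, $next=(next+1)\bmod n$, the assignments to $View$, the loop-guard evaluations, and the \textbf{return}s) touch no register. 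Hence $c_v$ can be affected only by an update.

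Second, for $r_v$ and $w_v$ I would classify each live process by its \emph{next snapshot operation} together with its $View$: it contributes to $r_v$ if that operation is a \scan{}; it contributes to $w_v$ if that operation is an \update{} with $v\in View$ or if there is no next snapshot operation; and it contributes to neither if the next operation is an \update{} with $v\notin View$. I would then verify that none of the local lines moves a process between these classes. The assignments $next=0$ and $next=(next+1)\bmod n$ only change which register a future update will target; the evaluations of the guards at Lines~\ref{li:while-add} and~\ref{li:while-get} are deterministic in the last $Snap$ read and so do not alter which operation comes next (that operation is already fixed by the local state, and one may look past these deterministic steps); and a \textbf{return} merely realizes the already-present ``no next snapshot operation'' status. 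An invocation likewise produces a net-zero swap, moving a process from the ``no next operation'' part of $w_v$ into $r_v$. Thus each such step leaves $r_v$ and $w_v$ unchanged.

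The one delicate point, which I expect to be the crux, concerns the condition $v\in View$ governing membership in $w_v$: it can flip only when $View$ grows, and $View$ is enlarged precisely on the lines immediately following a \scan{}, namely $View=View\cup vals(Snap)\cup\{v\}$ and $View=vals(Snap)\cup View$. I would regard each such assignment as part of the scan step that produced $Snap$, so that any resulting change in the $w_v$-count is charged to that scan, and so that between two consecutive snapshot operations $View$, and hence the truth of $v\in View$, is constant. Read as a standalone step, $View=View\cup vals(Snap)\cup\{v\}$ could by itself raise $w_v$ (when $v$ first enters $View$ while the next operation is an \update{}), so the point to get right is exactly this bundling, together with the observation that the look-ahead defining the next snapshot operation is insensitive to the intervening deterministic local code.

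With these three points in hand, between any two consecutive scan/update events all of $r_v$, $w_v$, and $c_v$ are constant, so $\alpha_v$ is constant there and can jump only at a scan (Lines~\ref{li:scan1}--\ref{li:scan4}) or an update (Lines~\ref{li:update1} and~\ref{li:update2}), as claimed.
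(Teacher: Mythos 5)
Your proposal is correct, but there is nothing in the paper to compare it against line by line: the paper states this observation without proof, as one of several ``easy observations,'' and uses it only to justify the sentence that follows it (``in the following we consider only steps of processes that are $scan$ or $update$''). What your write-up contributes is the identification of the one genuinely delicate point, and your resolution of it is exactly the convention the paper relies on implicitly. Read at the granularity of individual pseudocode lines, the observation is in fact false: the local assignment $View=View\cup vals(Snap)\cup\{v\}$ can move a process into the class counted by $w_v$ (when $v$ first enters $View$ while the process's next snapshot operation is an $update$), so $\alpha_v$ would change at a purely local step, and $\alpha_v$ would even dip momentarily between the scan and that assignment. Your remedy --- bundling each $View$-assignment with the $scan$ that produced its $Snap$, so that between consecutive snapshot operations the truth of $v\in View$ is constant --- is precisely the step granularity the paper's own case analysis assumes: in the proof of Lemma~\ref{lemmaA:increasingstep}, the scan case asserts ``after this $scan$, $View$ at process $p$ contains $v$,'' i.e., the scan and the following assignment are treated as one atomic step, and without that treatment that lemma's monotonicity claim would fail at scan steps. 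Your handling of the remaining local steps is also sound: the $next$ arithmetic and the loop guards are deterministic in the local state, so the look-ahead defining ``next snapshot operation'' is unaffected by them; a \textbf{return} realizes a status already counted in clause (2) of $w_v$; and an invocation is a net-zero swap from the clause-(2) part of $w_v$ into $r_v$ --- a boundary case the paper never mentions at all.
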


Due to this observation in the following we consider only steps of
processes that are $scan$ or $update$.

\begin{lemma}(Lemma \ref{lemma:increasingstep})\label{lemmaA:increasingstep}
    Assume $\alpha_v(\tau) >n$ and the next step in $H$, is made at
    time $\tau'\geq \tau$, then we have  $\alpha_v(\tau)\leq \alpha_v(\tau')$.
\end{lemma}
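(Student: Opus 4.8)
The plan is to reduce the claim to a purely local, per-step accounting. Because the next step of $H$ occurs at $\tau'$, the register array is frozen on $[\tau,\tau')$, and that step is taken by a single process $p$. An atomic step of $p$ can change $\alpha_v$ only by moving $p$'s own contribution between $r_v$ and $w_v$ and by altering the content of the one register $p$ touches (which affects $c_v$); all other processes' contributions are untouched. Since $\alpha_v$ is modified only by a \emph{scan} or an \emph{update}, I would split into these two cases and, in each, check that the change in $p$'s contribution to $r_v+w_v$ plus the change in $c_v$ is non-negative.

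The \emph{update} case I expect to go through unconditionally. Before the step $p$ is poised to update, so it contributes to $w_v$ precisely when $v\in View$; after writing its $View$ into one register, $p$ is (by Observation~\ref{obsA:updatescan}) poised to scan or else stops, so afterwards it contributes exactly one unit to $r_v+w_v$. The register whose content changes becomes equal to $View$, hence acquires or retains $v$ iff $v\in View$. A four-way check over ``$v\in View$ or not'' and ``the register held $v$ before or not'' shows the gain in $r_v+w_v$ always compensates any loss in $c_v$: the only lossy combination, overwriting a $v$-register while $v\notin View$, is exactly offset by $p$ newly entering $r_v$. Thus $\alpha_v$ does not decrease here, independently of the hypothesis.

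The \emph{scan} case is where $\alpha_v(\tau)>n$ is needed, and I expect it to be the main obstacle. A scan leaves $c_v$ unchanged and removes $p$ from $r_v$; $p$ re-enters the count through clause~(1) of $w_v$ if $v$ now lies in its $View$, or through clause~(2) if it stops, in every situation except one: when, after the scan, $v\notin View$ and $p$ is poised to update. Only then would $\alpha_v$ drop by one. To exclude it, I would use that the scanned array reflects the frozen contents of $R$ on $[\tau,\tau')$: since the code sets $View\supseteq vals(Snap)$, the failure $v\notin View$ forces $v\notin vals(Snap)=\bigcup_i Snap[i]$, so $v$ is contained in no single register at time $\tau$, i.e.\ $c_v(\tau)=0$. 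Then Observation~\ref{obsA:trivial} gives $\alpha_v(\tau)=r_v(\tau)+w_v(\tau)\le n$, contradicting $\alpha_v(\tau)>n$. Hence the dangerous subcase cannot occur and $\alpha_v(\tau)\le\alpha_v(\tau')$.

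Finally, I would note that the two readings of ``$v$ belongs to a register''—membership when $v$ is a single value, and containment $v\subseteq R[i]$ when $v$ is a set of values—are handled uniformly, since the decisive implication used above, that $v\notin vals(Snap)$ forces $c_v(\tau)=0$, holds in both readings by monotonicity of the union $\bigcup_i Snap[i]$. The whole argument is thus a bounded case analysis whose single delicate point is the scan/overwrite interaction, resolved by combining the hypothesis $\alpha_v(\tau)>n$ with Observation~\ref{obsA:trivial}.
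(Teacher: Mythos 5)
Your proof is correct and follows essentially the same route as the paper's: a scan/update case split in which the update case is closed by unconditional local accounting (the process's guaranteed post-update contribution to $r_v+w_v$ offsets the at-most-one register that can lose $v$), and the scan case is closed by combining the hypothesis $\alpha_v(\tau)>n$ with Observation~\ref{obsA:trivial} to force $c_v(\tau)\ge 1$, so the scan necessarily imports $v$ into $View$. The only cosmetic difference is that you run the scan case by contradiction (excluding the single lossy subcase), whereas the paper argues it directly.
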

\begin{proof}
    Consider $H$ an history and some time $\tau$ and 
    assume that at time $\tau$,  we have $\alpha_v(\tau) >n$.
    Consider the next snapshot operation in $H$ in algorithm of Fig.~\ref{fig:aws}.
    Let $p$ the process that executes this operation.
    \begin{itemize}
        \item
            The next step is a $scan$ (Lines~\ref{li:scan1},
            \ref{li:scan2}, \ref{li:scan3} and \ref{li:scan4}) then by 
            observation~\ref{obsA:trivial} and the hypothesis $\alpha_v(\tau) >n$, 
            we have $c_v(\tau) \geq 1$, hence at least one of the element of array $R$ 
            contains $v$ and the result of the $scan$ contains $v$.  So after this $scan$, 
            $View$ at process $p$ contains v.

            Consider now the next snapshot operation of that process either it will make 
            an update and this next update will contain $v$  or the process will take 
            no more snapshot operations   or the process will make another $scan$.
            In the two first cases , $w_v(\tau')= w_v(\tau)+1$, and $r_v(\tau')=r_v(\tau)-1$.
            Hence at time $\tau'$ we have $w_v(\tau')= w_v(\tau)+1$, $c_v(\tau')=c_v(\tau)$ and 
            $r_v(\tau')=r_v(\tau)-1$. 
            Then $$\alpha_v(\tau')=r_v(\tau)-1+w_v(\tau) +1+c_ v(\tau) =\alpha_v(\tau)$$.
            In the last case $w_v(\tau')= w_v(\tau)$, and $r_v(\tau')=r_v(\tau)$.
            Hence at time $\tau'$ we have  $c_v(\tau')=c_v(\tau)$. 
            Then $$\alpha_v(\tau')=r_v(\tau)+w_v(\tau) +c_ v(\tau) =\alpha_v(\tau)$$

        \item
            The next step is an $update$ (Lines \ref{li:update1} and \ref{li:update2}).
            By Observation~\ref{obsA:updatescan}, either (1) the next snapshot operation 
            of $p$  is a $scan$  then $r_v(\tau')=r_v(\tau)+1$, or 
            (2) there is no next snapshot operation for $p$ then $r_v(\tau')=r_v(\tau)$.

            Consider the first case, and the two following subcases:
            \begin{itemize}
                \item
                    That $update$ is an $update(-,V)$,  with $v\in V$ or $v \subseteq V$, 
                    then $w_v(\tau')= w_v(\tau)-1$, as the update write $V$ in $R$,
                    $c_v(\tau')\geq c_v(\tau)$ and by Observation~\ref{obsA:updatescan},
                    $r_v(\tau')=r_v(\tau)+1$. Hence:
                    $$\alpha_v(\tau')\geq r_v(\tau)+1+w_v(\tau)-1 +c_ v(\tau)
                    =\alpha_v(\tau)$$
                \item
                    That $update$ does not contain $v$, then $w_v(\tau')= w_v(\tau)$, 
                    the update may erase at most one element of $R$ then
                    $c_v(\tau')\geq c_v(\tau)-1$ and by Observation~\ref{obsA:updatescan},
                    $r_v(\tau')=r_v(\tau)+1$.  Hence:
                    $$\alpha_v(\tau')\geq r_v(\tau)+1+w_v(\tau) +c_ v(\tau)-1
                    =\alpha_v(\tau)$$
            \end{itemize}

            Consider the second case and the two following subcases:
            \begin{itemize}
                \item
                    That $update$ is an $update(-,V)$,  with $v\in V$ or $v \subseteq V$, 
                    then $w_v(\tau')= w_v(\tau)$.
                    As the update write $v$ in $R$, $c_v(\tau')\geq c_v(\tau)$.
                \item
                    That $update$ does not contain $v$, then $w_v(\tau')= w_v(\tau) +1$, 
                    the update may erase at most one element of $R$ then $c_v(\tau')\geq c_v(\tau)-1$.  
                    Hence: $$\alpha_v(\tau')\geq r_v(\tau)+w_v(\tau) +1+c_ v(\tau)-1 =\alpha_v(\tau)$$
            \end{itemize}
    \end{itemize}

\end{proof}
By an easy induction on the steps of $H$ we get:
\begin{lemma}(Lemma \ref{lemma:increasing0})\label{lemmaA:increasing0}
    If $\alpha_v(\tau) > n$  then for all $\tau'$, such that
    $\tau \leq \tau' $, $\alpha_v(\tau) \leq \alpha_v(\tau')$
\end{lemma}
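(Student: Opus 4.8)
The plan is to reduce the multi-step claim to the single-step result of Lemma~\ref{lemmaA:increasingstep} by induction on the number of snapshot operations (scans and updates) occurring in the interval. First I would invoke the fact that $\alpha_v$ is modified only by a $scan$ or an $update$: between two consecutive such steps $\alpha_v$ is constant, so the value of $\tau'$ enters only through how many snapshot steps separate it from $\tau$. This lets me replace the continuous time parameter by a discrete step count and set up a clean induction.

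For the induction, let $m$ be the number of snapshot steps taking place in $(\tau,\tau']$. In the base case $m=0$ there is no scan or update strictly after $\tau$ up to $\tau'$, hence $\alpha_v(\tau')=\alpha_v(\tau)$ and the inequality holds with equality. For the inductive step with $m\ge 1$, I would let $\sigma$ denote the time immediately after the first snapshot step following $\tau$. Since $\alpha_v(\tau)>n$ by hypothesis, Lemma~\ref{lemmaA:increasingstep} applies to this first step and yields $\alpha_v(\tau)\le\alpha_v(\sigma)$.

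The point that makes the induction close is the preservation of the threshold hypothesis: from $\alpha_v(\tau)>n$ and $\alpha_v(\tau)\le\alpha_v(\sigma)$ I get $\alpha_v(\sigma)>n$, so Lemma~\ref{lemmaA:increasingstep} remains applicable at the next step. The interval $(\sigma,\tau']$ now contains exactly $m-1$ snapshot steps and still satisfies the hypothesis $\alpha_v(\sigma)>n$, so the induction hypothesis gives $\alpha_v(\sigma)\le\alpha_v(\tau')$. Chaining the two inequalities produces $\alpha_v(\tau)\le\alpha_v(\sigma)\le\alpha_v(\tau')$, which is the desired conclusion.

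I do not expect a genuine obstacle here, as the statement itself advertises; the only thing that needs care is exactly this maintenance of the invariant $\alpha_v>n$ along the whole interval. Lemma~\ref{lemmaA:increasingstep} is conditioned on $\alpha_v>n$ and would fail to apply the moment $\alpha_v$ could fall to $n$ or below, so it is essential to verify that the value never re-enters that regime. Because the single-step lemma simultaneously guarantees monotonicity and is itself guarded by the condition $\alpha_v>n$, the two dovetail: once the quantity has strictly exceeded $n$, each step keeps it above $n$, and the induction goes through without further subtlety.
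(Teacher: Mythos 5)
Your proof is correct and is precisely the ``easy induction on the steps of $H$'' that the paper invokes without spelling out: you reduce to Lemma~\ref{lemmaA:increasingstep} step by step, with the key point being that monotonicity from the single-step lemma preserves the guard $\alpha_v>n$, so the lemma stays applicable throughout. Nothing differs in substance from the paper's (implicit) argument.
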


\begin{lemma}(Lemma \ref{lemma:increasing})\label{lemmaA:increasing}
    If $\tau_v\neq \bot$ then   for all $\tau$,  such that $\tau_v\leq \tau $, 
    $n<\alpha_v(\tau_v)\leq \alpha_v(\tau)$
\end{lemma}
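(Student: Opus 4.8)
The plan is to reduce the whole statement to the single strict inequality $\alpha_v(\tau_v) > n$. Once this is known, Lemma~\ref{lemmaA:increasing0} applies at time $\tau_v$ and gives $\alpha_v(\tau_v) \le \alpha_v(\tau)$ for every $\tau \ge \tau_v$, so that the claimed chain $n < \alpha_v(\tau_v) \le \alpha_v(\tau)$ follows at once. Thus all the work lies in evaluating $\alpha_v$ at the critical instant $\tau_v$.

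To bound $\alpha_v(\tau_v)$ from below I would first record that $c_v(\tau_v) = n$: by its very definition $\tau_v$ is the first time at which $v$ belongs to all $n$ registers of $R$, so at that instant every register contains $v$ and the register-count summand of $\alpha_v$ equals $n$. It then suffices to show $r_v(\tau_v) + w_v(\tau_v) \ge 1$, for then $\alpha_v(\tau_v) = r_v(\tau_v) + w_v(\tau_v) + c_v(\tau_v) \ge 1 + n > n$.

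To exhibit one such process I would isolate the process $p$ whose $update$ first completes the coverage, i.e.\ whose write turns the last register not yet holding $v$ into one that holds $v$. Such a $p$ exists because register contents change only through $update$ operations, and since these are linearized, the transition that defines $\tau_v$ is caused by exactly one $update$. By Observation~\ref{obsA:updatescan}, after this $update$ the process $p$ either performs a $scan$ as its next snapshot operation, in which case it is counted in $r_v(\tau_v)$, or it takes no further snapshot operation, in which case it is counted in $w_v(\tau_v)$ through clause~(2) of the definition. Either way $r_v(\tau_v) + w_v(\tau_v) \ge 1$, which is all that remains.

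The only delicate point, and hence the step I expect to be the main obstacle, is this last bookkeeping: one must check that $p$ is captured by exactly one of the two alternatives of Observation~\ref{obsA:updatescan}, and that in the ``takes no further step'' case it qualifies under clause~(2) of $w_v$, which does not demand $v \in View$, rather than under clause~(1), which could fail once $p$ has crashed or terminated. The identity $c_v(\tau_v) = n$ and the closing appeal to Lemma~\ref{lemmaA:increasing0} are routine; the whole subtlety is in matching the post-$update$ status of the completing process to the correct summand of $\alpha_v$.
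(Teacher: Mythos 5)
Your proposal is correct and follows essentially the same route as the paper's proof: establish $c_v(\tau_v)=n$, then argue that the process whose $update$ completed the coverage contributes $1$ to $r_v(\tau_v)+w_v(\tau_v)$ (next operation a $scan$, or no next snapshot operation, via Observation~\ref{obsA:updatescan}), giving $\alpha_v(\tau_v)>n$, and finish by Lemma~\ref{lemmaA:increasing0}. Your treatment of the terminating/crashed case under clause~(2) of $w_v$ is if anything slightly more careful than the paper's wording.
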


\begin{proof}
    At time $ \tau_{v} $, $v$ is in all registers then  $c_v(\tau_v)=n$
    and $\alpha_v(\tau)\geq n$.
    Consider the update made just before time $\tau_v$, and the process that made
    this update, the next snapshot operation for this process is a $scan$
    or the process terminates with $v$ in its view, we have $r_v(\tau)+w_r(\tau) \ge 1$
    and then just after time $\tau_v$,  $\alpha_v(\tau)> n$. Hence by 
    Lemma~\ref{lemmaA:increasing0}, and an easy induction we deduce the Lemma.
\end{proof}

\begin{lemma}
    \label{lemmaA:atleastone}
    If $\tau_{v} \neq \bot$, for all  $\tau \geq \tau_{v} $,  $c_v(\tau)\geq 1$ 
\end{lemma}
\begin{proof}
    From Lemma~\ref{lemmaA:increasing},
    $\alpha_v(\tau)=r_v(\tau)+w_v(\tau)+c_v(\tau) >n$ and by
    observation~\ref{obsA:trivial}  that $r_v(\tau)+w_v(\tau) \leq
    n$. Hence $c_v(\tau)\geq 1$.
\end{proof}

\begin{lemma}( Lemma \ref{lemma:safe} )\label{lemmaA:safe}
    $H_{seq}$ satisfies the sequential specification of the weak set
\end{lemma}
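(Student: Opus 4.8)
The plan is to verify that the sequential history $H_{seq}$ built from the linearization points defined above meets the weak-set sequential specification. Since every \add{} operation returns only \textit{ACK}, the sole nontrivial requirement is that each terminating \get{} returns exactly the set of values whose \add{} operations are linearized before it; this one statement encodes both validity and the correct set-membership semantics. So first I would fix a \get{} operation $op_g$ returning $V$, with linearization point $\tau_{op_g}=\max\{\tau_V,invoc(op_g)\}$, and let $s$ be the scan inside $op_g$ whose result satisfies the termination test, so that $R[r]=V$ for every register index $r$ at time $s$. Since $V$ then belongs to all registers at $s$ we get $\tau_V\le s$, and of course $invoc(op_g)\le s$, hence $\tau_{op_g}\le s$. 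I would then establish the two inclusions separately.

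For the inclusion ``$V\subseteq$ (values added before $op_g$)'', which also gives validity, I take $v\in V$. At time $\tau_V$ the whole of $V$ already lies in every register, so in particular $v$ is in every register at $\tau_V$, giving $\tau_v\le\tau_V\le\tau_{op_g}$ and $\tau_v\neq\bot$. By Observation~\ref{obsA:integrity} the value $v$ enters the shared memory only through an \add{v}; the process that is first to hold $v$ in its $View$ must have obtained it from its own \add{v} rather than from a scan (a scan can return $v$ only after $v$ is already stored), so there is an \add{v} whose invocation precedes the first appearance of $v$ in any register, and a fortiori precedes $\tau_v$. That operation is therefore linearized at $\max\{\tau_v,invoc\}=\tau_v\le\tau_{op_g}$, as required.

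For the reverse inclusion ``(values added before $op_g$)$\subseteq V$'', I take any \add{v} linearized at a point $\le\tau_{op_g}$. Its linearization point is $\max\{\tau_v,invoc\}$, so $\tau_v\le\tau_{op_g}\le s$, and in particular $\tau_v\neq\bot$. The crucial step is persistence: by Lemma~\ref{lemmaA:atleastone}, once $\tau_v$ has passed, $v$ sits in at least one register at every later time, so $c_v(s)\ge1$. Thus some register contains $v$ at the terminating scan $s$; but every register equals $V$ at time $s$, whence $v\in V$. Combining the two inclusions shows that $op_g$ returns precisely the set of values linearized before it, so $H_{seq}$ respects the sequential specification.

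I expect the reverse inclusion to be the main obstacle, and specifically the persistence property encapsulated in Lemma~\ref{lemmaA:atleastone}: it is the formal counterpart of the intuition that a value which has already reached all $n$ registers can never be simultaneously overwritten in all of them, because at most $n-1$ processes can be covering registers at once. Everything else reduces to unwinding the $\max\{\cdot,invoc\}$ definitions together with the interval lemma that already places each $\tau_{op}$ in $[invoc(op),resp(op)]$; the one remaining care is the tie-breaking convention when an \add{} shares a linearization point with $op_g$, which I would resolve by ordering such \add{} operations before the \get{}, consistent with both inclusions being stated with $\le$.
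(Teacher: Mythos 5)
Your proof is correct and takes essentially the same route as the paper's: the same reduction to showing each terminating \textsc{get} returns exactly the values added before it, with the forward inclusion resting on Observation~\ref{obsA:integrity} (tracing a returned value to an \textsc{add} invoked before the value's first appearance in the registers) and the reverse inclusion resting on the persistence Lemma~\ref{lemmaA:atleastone} applied at the terminating scan. If anything, you are slightly more explicit than the paper on the choice of the witnessing \textsc{add} operation and on tie-breaking at equal linearization points, but the argument is the same.
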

\begin{proof}
    Let $op$ be  a $\textsc{get}$ operation, consider the set $A$ of all $\textsc{add}$ operations
    linearized before (such that  $\textsc{add}(x) \in A$ if 
    $\tau_{\textsc{add}(x)}\leq \tau_{op}$). If $\textsc{get}$ does not terminate,
    there is no linearization. If $\textsc{get}$ operation $op$  terminates it
    returns with, say, 
    view $V$.

    Consider $x\in V$. By observation~\ref{obsA:integrity}, then at least one process
    invoked an $\textsc{add}(x)$.

    Due to the termination loop condition of Line~\ref{li:while-get} of Algorithm of
    Fig.~\ref{fig:aws}, at the time $\tau$ of the last $scan$ made by any process 
    returning $V$, if $x\in V$, then $x$ belongs to all registers proving that 
    $\tau_x < \tau$.
    Then if $x\in V$, then there exists $\textsc{add}(x)$ in $A$.

    Assume that there is a $v$ such that $\tau_v<\tau_{op}$, then by
    Lemma~\ref{lemmaA:atleastone}, after time $\tau_v$, $v$ belongs forever
    to at least one register in $R$. Then all $scan$ made after time $\tau_v$ contains $v$. 
    Due to the termination loop condition Line~\ref{li:while-get}, if any $scan$ of 
    a $\textsc{get}$ contains $v$, $v$ is in the view returned by that $\textsc{get}$. 
    Then if $op$ performs a $scan$ after time $\tau_v$, the view returned contains $v$.
    Then if $\textsc{add}(v)\in A$ then $v\in V$.

    Hence we have $V=\{x | add(x)\in A\}$.
\end{proof}

\subsection{$b$-Iterated barycentric agreement protocol}

To make the present paper self-contained, we present, in Fig.~\ref{alg:barycentric_agreement},
the $b$-iterated barycentric agreement protocol, which is a verbatim copy of one appeared 
in~\cite{yanagisawa2017wait}.

\begin{figure}[h]
    \hrule \vspace{1mm} {
        \setcounter{linenumber}{0}
        \begin{tabbing}
            bbb\=bbb\=bbb\=bbb\=bbb\=bbb\=bbb\=bbb\=bbb\=bbb\=bbb\=bbb\=  \kill

            Shared variable : \\
            \>\texttt{array of atomic weak set objects} : $\weakset[0\ldots b-1]$ \\[2mm]

            \textsc{Code for a process} \\[2mm] 
            Local variable:\\
            \>\texttt{Integer} $i$ init 0\\
            \>\texttt{Value} $view$ init $\perp$\\[2mm]

            \textsc{BARYAGREE}$_{b}$($v$):\\
            \nnll\> $\mathit{view}=v$\\
            \nnll\> \textbf{for} $i=0,\ldots,b-1$ \textbf{do}\\
            \nnll\>\>   \weakset[$i$].\add{\mathit{view}}\\
            \nnll\>\>   $\mathit{view}=\weakset[i].\get$\\
            \nnll\> \textbf{return} $\mathit{view}$\\
        \end{tabbing}
        \vspace{-6mm}
    \hrule}
    \caption{Anonymous $b$-iterated barycentric agreement protocol}
    \label{alg:barycentric_agreement}
\end{figure}

In the protocol, each process starts with its private input value and assigns it 
to the local variable view (line 1). Thereafter, the process iterates from $i=0$ to $b$, 
the operation of adding its view to \weakset$[i]$ and updating its view by the
result of a get operation to \weakset[$i$] (line 2--4). 
At last, the process outputs the value held in its local variable $view$ (Line~5).

%


\end{document}